\documentclass[11pt]{article}

\usepackage{graphicx,natbib,amsmath,amssymb,enumerate,xcolor,multicol,amsthm,lineno,setspace}
\usepackage[shortlabels]{enumitem}
\usepackage[a4paper]{geometry}
\usepackage{bm}

\onehalfspacing

\newtheorem{proposition}{Proposition}
\newtheorem{theorem}{Theorem}
\newtheorem{lemma}{Lemma}
\date{\today}

\newcommand{\bbeta}{\boldsymbol{\beta}}
\newcommand{\bphi}{\boldsymbol{\phi}}
\newcommand{\bz}{\mathbf{z}}

\newcommand{\bk}{\mathbf{k}}
\newcommand{\bX}{\mathbf{X}}
\newcommand{\bY}{\mathbf{Y}}
\renewcommand{\P}{\mathrm{P}}
\newcommand{\R}{\mathbb{R}}
\newcommand{\dd}{\mathrm{d}}
\newcommand{\E}{\mathrm{E}}
\newcommand{\Var}{\mathrm{Var}}

\newcommand{\rev}[1]{\textcolor{black}{#1}}

\title {Marked point processes intensity estimation using sparse group Lasso method applied to  locations of  lucrative and cooperative banks  in mainland France}
\author{Amélie Artis\footnote{Univ. Grenoble Alpes, CNRS, Sciences Po Grenoble, PACTE, France}, 
Achmad Choiruddin\footnote{Department of Statistics, Institut Teknologi Sepuluh Nopember, Indonesia}, \\
Jean-François Coeurjolly, Frédérique Letué \footnote{Univ. Grenoble Alpes, CNRS, LJK, France}}

\begin{document}

\maketitle

\paragraph{Short title.} Spatial analysis of banks locations

\paragraph{Acknowledgements.}  The authors are grateful to students from Master SSD at Univ. Grenoble Alpes who helped in collecting data, in particular the banks locations. 

\paragraph{Funding statement.} The research of JF Coeurjolly and F Letué is supported by Persyval-lab (ANR-11-61 LABX-0025-01). The research of A Choiruddin is funded by the Indonesian Endowment Fund for Education (LPDP) on behalf of the Indonesian Ministry of Higher Education, Science and Technology and managed under the EQUITY Program (Contract No 4299/B3/DT.03.08/2025 \& No 3029/PKS/ITS/2025).

\paragraph{Data availability statement.} Data used in this study are publicly available and may be freely reused, see~\citet{VVMZTN_2025}.

\paragraph{Conflict of interest disclosure and ethics approval statement.}The authors declare no conflict of interest and approve the ethics of the journal.



\newpage

\begin{abstract}
In this paper, we model the locations of five major banks in mainland France, two lucrative and three cooperative institutions based on socio-economic considerations. Locations of banks are collected using web scrapping and constitute a bivariate spatial point process for which we estimate non parametrically summary  functions (intensity, Ripley and cross-Ripley's $K$ functions). This shows that the pattern is highly inhomogenenous and exhibits a  clustering effect especially at small scales, and thus a significant departure to the bivariate (inhomogeneous) Poisson point process is pointed out. We also collect socio-economic datasets (at the living area level) from INSEE and propose a parametric modelling of the intensity function using these covariates. We propose a group-penalized bivariate composite likelihood method to estimate the model parameters, and we establish its asymptotic properties. The application of the methodology to the banking dataset provides new insights into the specificity of the cooperative model within the sector, particularly in relation to the theories of institutional isomorphism.
\end{abstract}

\noindent{\bf keywords:} cooperative banks; regional development; geo-referenced data; spatial statistics; point pattern analysis; regularization techniques; composite likelihood. \\

\section{Introduction}

The spatial distribution of banks is the outcome of a historical process. This process reveals that local banks established territorial oligarchies in the early 18th century \citep{plessis2000recherches}, which were later challenged by the formation of large national banks, such as Crédit Lyonnais in 1895. In addition to these family-owned institutions, cooperative banks emerged to offer financial services to individuals who were excluded from the traditional banking system. These local and retail banks were originally created to address the credit needs of workers, self-employed individuals, small farmers, and craftsmen who faced the burden of usury \citep{gueslin1998invention}. Founded in Europe during the 19th century, cooperative banks were structured around local communities, providing financial solutions managed by and for their members. 
Today, cooperative banks play a significant role in the banking and financial sector in several European countries. In 2020, they accounted for over 30\% of the domestic deposit and loan market in Austria, Finland, France, and the Netherlands \citep{eacb}. Their funding activities have been noted for producing idiosyncratic knowledge \citep{artis2016composition}. Furthermore, the proximity between lenders and borrowers enhances the availability of credit and reduces default rates \citep{fisman2017cultural}. However, as cooperative banking groups increasingly concentrate and consolidate \citep{ory2012efficiency}, the distinctive characteristics of cooperative banks seem to be diminishing. This trend distances them from their local communities, which have been essential to their success for many years \citep{roux2019introduction}. The loss of democratic engagement, the rise of universal banking, and the presence of institutional investors in international financial markets raise the question: Do significant differences still exist? 
Currently, competition and new methods of delivering banking services, particularly digitalization, challenge the importance of a local presence. The digitalization of banking services and the emergence of online banks are intensifying competition as consumer behavior evolves, with local physical presence becoming less of a priority in decision-making. Increased competition and the move toward digital services have led to a rationalization of physical locations. However, does this mean that the unique spatial characteristics of banks have been eliminated?

As part of a geographical agenda, this article outcomes and examines a mapping of the spatial distribution of cooperative banks and searches for the reasons underlying this point pattern. We develop a spatial analysis to identify the possible existence of regional differentiation between cooperative and non-cooperative banks.
Our work raises the following questions: How are lucrative and cooperative banks distributed across mainland France?
To what extent is location still a differentiating factor between cooperative and non-cooperative banks?
What does the new location landscape look like for cooperative banks?
To what extent can current bank locations be explained by socio-economic factors?
Are these economic factors still differentiating factors for cooperative banks?
We aim to investigate whether a spatial differentiation exists between profit-driven banks and cooperative banks. Analyzing spatial variables offers a relevant research opportunity. Firstly, there have been few studies capable of spatially representing the geographical distribution of banks. Secondly, spatialization is one of the essential conditions of the concept of local banking. Finally, geographical presence is necessary for the production of idiosyncratic knowledge, which relies on geographical proximity \citep{artis2016composition}.

\section{Literature review and research questions}

Since the establishment of the single European market, the convergence of interest rates and the introduction of the universal banking model have led to the dismantling of geographical restrictions in the banking sector. This shift has prompted banks to expand their networks and standardize their offerings, with centralization reducing the autonomy of local branches. Consequently, the central argument of this analysis is that these changes have led banks to shift their strategic focus toward larger markets, thereby reducing their emphasis on local concerns.

For cooperative banks, the number and location of branches have traditionally been central to their strategic differentiation. Cooperative banks have their origins in local communities, arising out of the lack of a conventional banking system in France \citep{artis2013finance}. This historical context suggests a distinct spatial configuration \citep{le2019geographie}, where the banks’ operations were tailored to the needs and dynamics of their local environment. 
\rev{The spatial distribution of bank branches at any given date results from the accumulation of location decisions made over time. These decisions are shaped by long-term strategy as well as short-term real estate opportunities. Cooperative and non-cooperative banks use broadly similar procedures and timeframes to establish new branches; however, their location criteria differ. Specifically, cooperative banks have historically favoured proximity to their member base and local economic activity, whereas non-cooperative banks often prioritise commercial catchment potential. Given these differing criteria, our analysis captures the current spatial configuration of branch networks and models it as a function of present territorial characteristics.}

\rev{Our {\bf first hypothesis} posits that spatial differentiation may still be present in the current distribution of bank branches. Our {\bf second hypothesis} is that underlying factors may explain this spatial distribution.}

Literature on the relationship between the banking system and regional development highlights several key issues: whether local banking systems contribute to local economic growth; differences in the spatial configurations of banking networks; the factors that influence the decision to maintain or withdraw a local banking presence. The characteristics of local banking markets are important for local economies, especially in regard to small and medium-sized enterprises (SMEs). Geographic distances between banks and businesses affect access to credit, with small businesses particularly impacted by such distances. \citet{presbitero2014home} show that most loans are contracted locally, reinforcing the importance of proximity. Similarly, \citet{liberti2008estimating} and \citet{agarwal2010distance}  argue that a shorter geographical distance between the information-gathering agent and the loan officer enhances the collection and utilization of soft information about potential borrowers. In contrast, \cite{mian2006distance} suggests that greater distances complicate contract renegotiations and reduce the likelihood of successful loan recovery.

\citet{alessandrini2009global} further emphasizes the significance of local credit markets for financing small borrowers and fostering local economic development. \citet{reydet2019changement} identifies emerging branch models that blend high digitalization with physical presence, aiming to sustain a geographical network of locations.

In the Italian context,  \citet{papi2017geographical} explores the geography of banking organizations, focusing on how geographical distances influence the banking sector. The study identifies three types of distance: (i) operational distance, defined as the distance between the borrower and the bank branch; (ii) functional distance, which refers to the distance between the branch and the bank’s headquarters; and (iii) interbank distance, which denotes the distance between borrowers and competing banks. By applying the theory of banking centers and peripheries, \citet{papi2017geographical} demonstrates that these geographical distances significantly impact lending decisions and interbank competition.

Much of the existing literature supports the argument that cooperative banks possess a comparative advantage in lending to SMEs \citep{artis2016composition}. \rev{Funding applications typically rely on quantitative ('hard') information, primarily financial data and standardised data, as well as qualitative ('soft') information, which is more personalised \citet{berger2002small}.}
The presence of a local bank that prioritizes relationship banking and utilizes soft information has been shown to positively influence local economic activity. According to \citet{hasan2019economic}, the development of local bank branch networks plays a vital role in stimulating economic activity, particularly in the local labor market.

\rev{Based on this literature, the two types of banks serve different customer bases, and their spatial distribution may also reflect these distinct market orientations. Building on this established evidence, our analysis examines whether the observed spatial patterns of branch locations are consistent with such differentiated strategies — particularly in terms of proximity to areas characterised by different household income levels and business demographics}. 

For a long time, the spatial distribution of banks was guided by local market theory \citep{radecki1998expanding}, which posits that customers—both businesses and households—primarily consume banking services based on geographical proximity. In addition to this proximity, the distribution of banks also follows the logic of customer segmentation. Mutual banks, such as Crédit Mutuel and Crédit Agricole, have historically focused on meeting the needs of farmers and small businesses \citep{artis2016composition}. They tend to establish branches close to their members and customers. In contrast, major banking networks like BNP Paribas and Société Générale develop urban branches to attract a diverse customer base while maintaining regional decision-making centers.
\citet{radecki1998expanding} argues that the spatial distribution of banks is now driven more by organizational logic from the supply side rather than by customer behavior on the demand side. This shift implies the disappearance of local markets and the obsolescence of the local bank model, including cooperative banks as the archetypal representation of local banks. These changes in banks’ offerings suggest that the previously distinct spatial distribution among banks has diminished.
In these analyses, the organizational specifics of banks, such as their cooperative principles and rules, are not treated as significant explanatory variables. \citet{algeri2022spatial} investigate the impact of spatial dependence on the technical efficiency of small cooperative banks in Italy. They explore the idea of spatial dependence, suggesting that banks within a specific geographical region can influence one another. For cooperative banks, their findings reaffirm that geographical proximity is critical to bank performance.

\section{Methodologies}

\rev{Our analysis focuses on}  the case of France for theoretical and empirical reasons. First there has been a large and strong cooperative banking sector for two centuries. Then, the French banking and financial sector is considered to be well-established, offering comprehensive coverage and stable banking and financial companies, as well as an effective regulatory system.  Finally, there are homogeneous and longitudinal statistical data produced by the National Statistical Institute. 

The French banking system consists of two distinct groups of banks, which differ significantly in their historical development, business models, organisational structures, and scale. The first group, referred to as lucrative banks, includes institutions organised as joint-stock companies or as branches of foreign credit institutions operating nationwide branch networks. The second group comprises only cooperative banks, which are structured by regional clusters and are owned by their members.
The six largest French banking groups (BNP Paribas, BPCE, Crédit Agricole, Crédit Mutuel, La Banque Postale and Société Générale) account for 82\% of the total balance sheet of the entire French banking sector and are considered systemic by the regulator \citep{acpr}. In the last quarter of 2018, cooperative banks managed more than €159 billion in demand deposits from individuals, representing more than 40\% of total deposits. Cooperative banks finance approximately 41\% of consumer loans to residents (Web Stat, Banque de France). 

\rev{We selected five banks, within the six banking groups designated as systemically important institutions (établissements d'importance systémique) by the French prudential supervisory authority (ACPR, 2024). There are two non-cooperative banks and three cooperative banks thereby ensuring balanced representation of both governance models among the most structurally significant actors in the French banking system. {One group was excluded: La Banque Postale, although classified as a non-cooperative bank, was excluded because its branch network originates from the postal service infrastructure rather than from a market-driven location strategy, which would introduce a confounding factor in our spatial analysis} Together, the five selected groups offer nationwide geographical coverage. This coverage is particularly relevant given that a central objective of our study is to assess whether the historically documented stronger presence of cooperative banks in rural areas remains a statistical reality in the contemporary distribution of branches, or whether the spatial strategies of cooperative and non-cooperative banks have since converged. Examining groups with a genuinely nationwide footprint is therefore a necessary condition for this analysis}.

At the level of the social economy, including cooperatives, works use the combination of cluster analysis with multiple factor analysis (MFA), e.g. in Italy \citep{picciotti2014social}  or in France \citep{artis2020facteurs}. For cooperative banks, spatial distribution structures the activity and value chain of the cooperative, as well as its governance \citep{triboulet2013empreinte}. For example, \citet{triboulet2013empreinte} characterise the location patterns of agricultural cooperatives in France for 1995 and 2005 in order to better understand the determinants of their spatial organisation. They use the ESDA tools to construct a very detailed local typology of the specific location patterns of agricultural cooperatives, and then analyse the explanatory factors using multinomial logit modelling. Our analysis consists of two parts:  
\begin{itemize}
\item[(a)] \textit{Non parametric spatial analysis of locations of bank branches (lucrative and cooperative banks)}. \rev{We have two objectives in this analysis which focuses on answering our first hypothesis. First, we investigate the spatial variation of bank locations over the French territory, individually for the locations of lucrative and cooperative banks. Second, we explore eventual dependence between locations of two lucrative banks, between two cooperative ones and between one of each. To achieve these tasks, we collected exact GPS locations of 2 lucrative banks (BNP, Société Générale) and 3 cooperative ones (Banque Populaire, Crédit Agricole, Crédit Mutuel) in mainland France, as of 01/01/2022. These five banks cover 92\% of bank clients in France (as of 2023). Locations have been obtained using GPS transformations of banks addresses collected using web scraping. This results in 3,448 lucrative banks and 11,244 cooperative ones.}

\item[(b)] \textit{\rev{Model} of the intensity of each bank branch.} The objective is to propose a bivariate parametric modelling of the intensity function of lucrative and cooperative banks in terms of demographic, social and economic factors. \rev{Such a modeling framework proposes a better understanding on how demographic, social, and economic factors influence the distribution of locations of lucrative and cooperative banks, answering our second hypothesis. We rely on a dataset collected  by INSEE} as of 01/01/2022 at the living area spatial resolution (in total 1681 living areas). This dataset, after transformations taking into account the compositional nature of some covariates \citep{aitchison1982statistical}, results in eleven spatial covariates: three of them, tagged as demographic covariates, describe the population by ages and their recent evolution, four covariates tagged as social covariates are related to the standard of living, rate of poverty, rate of activity. Finally, four covariates describe the economical tissue. We also collect the \rev{population density} by living area. The objective of this study is to detect factors that  positively or negatively influence the presence/absence of cooperative or lucrative banks.  
\end{itemize}

\rev{As a summary, to achieve analyses (a)-(b), we created two datasets. The first one consists in  geographical coordinates} for five banking networks in France. While this is straightforward for lucrative banks, which have centralised management of their development across the country, this data does not exist in a consolidated form for cooperative banks, which manage their development on a regional basis, as autonomous banks. \rev{The second one is} produced by the National Statistics Agency \rev{(INSEE) and is collected at the living area level.} 
\rev{These area-level variables — capturing demographic composition, household income characteristics, and the local business fabric (including firm size distribution) — serve as covariates in our bivariate parametric model of the intensity functions of non-cooperative and cooperative bank branches. As seen in Section~\ref{sec:parametric}, in this framework, we model the spatial intensity of each bank type as a function of the demographic, social and economic characteristics of the surrounding bassin de vie, allowing us to identify which territorial features are associated with a higher or lower density of each type of bank branch.} All data used in this study are publicly available and may be freely reused, see~\citet{VVMZTN_2025}.

The rest of the paper is structured as follows. Section~\ref{sec:model} provides first a background on spatial point processes and investigate non parametric first and second-order intensity functions for the dataset of locations of banks. This section concludes that both cooperative and lucrative banks are highly inhomogeneously distributed and exhibit a dependence structure—both within each type and between the two types of banks, that cannot be explained solely by inhomogeneity.
Accordingly, Section~\ref{sec:parametric} investigates a parametric specification of the intensity functions for the two bank types (using an exponential family model), incorporating three groups of spatial covariates: (i) demographic characteristics, (ii) indicators of living standards and poverty/activity rates, and (iii) measures of the local economy. Given, the nature of these covariates and the bivariate multitype characteristic of the dataset, we propose to estimate the parameters using a new methodology based on a sparse group lasso penalization of the Poisson likelihood (a standard composite likelihood method). Section~\ref{sec:results} then presents the numerical implementation of the method on the banks dataset, reports the empirical results, provides a discussion, and concludes the analysis. Finally, Appendices~\ref{app:prop}–\ref{app:convergence} provide formal justification for the simplicity of the proposed implementation and establish convergence results, demonstrating that the method is consistent and reliably selects informative covariates when the amount of data increases.

\section{Spatial point processes and non parametric considerations} \label{sec:model}

\subsection{Background and notation} \label{sec:bck}

In this manuscript, we use the convention $[\{(\dots)\}]$ for the order of brackets. By nature, the number of banks as well as their locations are unknown before collecting the data. Therefore, the dataset of locations of banks is viewed as the realization of a bivariate spatial point process observed on, say $W \subset \mathbb R^2$, representing the mainland French territory. Spatial point processes are now well-studied stochastic models and we refer the interested reader by a theoretical presentation to~\citet{moller2003statistical} or by a more practical one to~\citet{baddeley2015spatial}. In particular, the latter discusses the implementation of point pattern analysis in \texttt{R}.

Let $\mathbf X_\ell$ and $\mathbf X_c$ be the two  point processes for lucrative and cooperative banks modelling these data. Their realization are of the form
\[
\mathbf x_\ell = \left\{ x_{i,\ell}, i=1,\dots,n_\ell\right\}
\quad \text{ and }\quad
\mathbf x_c = \left\{ x_{i,c}, i=1,\dots,n_c\right\}
\]
where $x_{i,\ell},x_{i,c} \in W$ denote the locations of a lucrative or cooperative bank {with total respectively} $n_\ell=3,448$ and $n_c=11,244$. We let $\rho_\ell$ and $\rho_c$ denote the respective intensity function of each type of bank. This is defined for $\bullet=\ell,c$ by $\rho_\bullet(x) = \lim_{\mathrm{d} x\to 0} \mathbb E N_\bullet(B(x,\mathrm dx)) /|\mathrm dx|$ where $N_\bullet(A)$ is the number of points of type $\bullet$ falling into $A\subset \mathbb R^2$. Roughly speaking, the intensity function measures the mean local \rev{number} of points per unit area. The higher the intensity at a location $x$, the most likely a bank should occur  at $x$. A point process is said to be inhomogeneous if the intensity function varies across  space.

In addition to the inhomogeneity nature of data, we could be interested in the dependence between two points (banks). To measure departure \rev{from the independence case, often represented by the bivariate Poisson point process,} which is the standard process modelling points (possibly inhomogeneous) without any interaction, spatial statistical literature provides many standard summary functions. Among them, we find the pair correlation \rev{and its integrated version named as the Ripley's $K$ function. For a stationary point process, $K(r)$ for some $r\ge 0$ measures the normalized (by the intensity) mean number of points with a ball centered at 0 with radius $r$, given that 0 is a point from the point process. Departures with the Poisson point process can be measured since, in this case, $K(r)=\pi r^2$ (for two-dimensional point patterns). To measure correlation between two point processes, we can inspect the cross-pair correlation function and its integrated version the cross-Ripley's $K$ function. We refer the reader to~\citet{moller2003statistical,baddeley2015spatial} for a more formal definition, and for their extensions to the inhomogeneous case. It is worth mentioning that the Ripley's and cross-Ripley's $K$ still equal $\pi r^2$ for bivariate Poisson point processes. 
The $L$ and $\tilde L$ functions are simple transformations of $K$ functions, given by $L(r) = \sqrt{K(r)/\pi}$ and $\tilde L(r) = L(r)-r$.
Hence,these summary functions measure within a distance $r$ whether there is a significant departure to the Poisson model. 
In the following, we will denote by $K_\bullet, L_\bullet, \tilde{L}_\bullet, \bullet=c,\ell$ Ripley's $K$, $L$ and $\tilde L$ functions for the cooperative ($\bullet=c$) banks point process and the lucrative ($\bullet=\ell$) banks point process, and by $K_{c,\ell},  L_{c,\ell}, \tilde{L}_{c,\ell}$ cross-Ripley's $K$, $L$ and $\tilde L$ functions.
$\tilde L_\bullet(r)>0$ (resp. $\tilde L_\bullet(r)<0$), suggests a significant clustering (resp. repulsive) effect, not explained by the inhomogeneity, between two banks of type $\bullet$ within distance $r$.} In the same way, $\tilde L_{c,\ell}(r)>0$ reflects the fact that, within distance $r$, one type of bank  influences positively the occurrence of the other one.

\subsection{Non parametric estimation of first and second-order characteristics} \label{sec:nonparametric}

Figures~\ref{fig:int}-\ref{fig:L} summarize this part and explore first and second-order characteristics of the bivariate point process. In Figure~\ref{fig:int}, we compute a kernel density \rev{estimate} of the intensity function. We consider adaptive smoothing bandwidths  (using a 16x16 spatial pixel grid for the bandwidth spatial function and a Gaussian kernel) according to the inverse-square-root rule of \citet{abramson1982bandwidth}. Even if three times more cooperative banks are observed, we clearly notice large differences in terms of spatial distribution between the two \rev{banks of different type. Lucrative banks are clearly concentrated in Ile-de-France and the largest cities (Lille, Lyon, Marseille, Toulouse, Bordeaux) while cooperative banks aim at covering more territory especially in the Brittany (Western part), Alsace-Lorraine (Eastern part) and the Région Centre.}

Figure~\ref{fig:L} explores second-order properties and represents empirical inhomogeneous recentered $\tilde L$ functions for each type of bank and cross $\tilde L$-function between the two types of banks. The solid line represents the estimate for which the preliminary estimation of the intensity function has been integrated. Envelopes have been constructed under the inhomogeneous Poisson assumption using the global envelopes testing machinery proposed by~\citet{myllymaki2017global}. We use the extreme rank length criterion and 199 simulations under the null assumption. Red dots correspond to distances where a significant departure to the Poisson assumption is observed. It is unambiguous to conclude that first cooperative and lucrative banks exhibit a quite strong clustering effect, especially at small distances, that is not explained by their inhomogeneity. In the same way, the cross $L$-function at small distances shows that two banks different type tend to occur more likely than under the Poisson case. This {reveals a strong} co-clustering at small scales, that is not explained by the inhomogeneity. 

\begin{figure}[ht]
    \centering
\includegraphics[width=.99\textwidth]{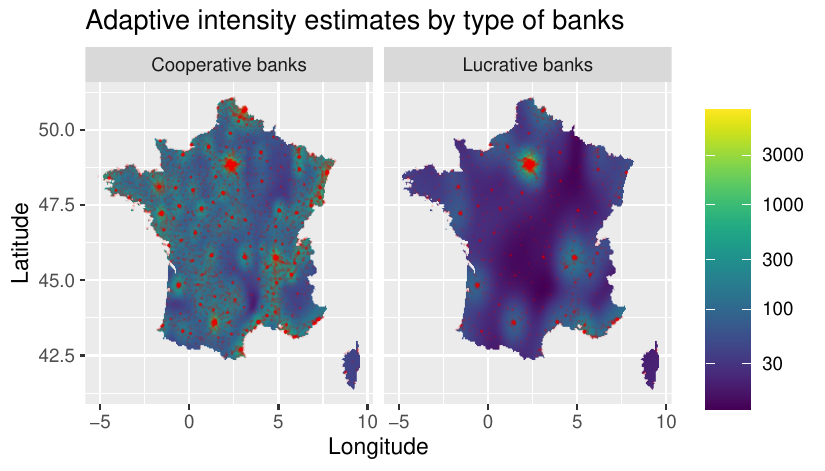}
    \caption{Adaptive intensity estimates obtained from kernel method with standard edge correction, for cooperative and lucrative banks. Locations of banks (in red) are superimposed on the intensity estimate. \rev{The pixel resolution of both images is 256x256.}}
    \label{fig:int}
\end{figure}

\begin{figure}[ht]
    \centering
\includegraphics[width=\textwidth]{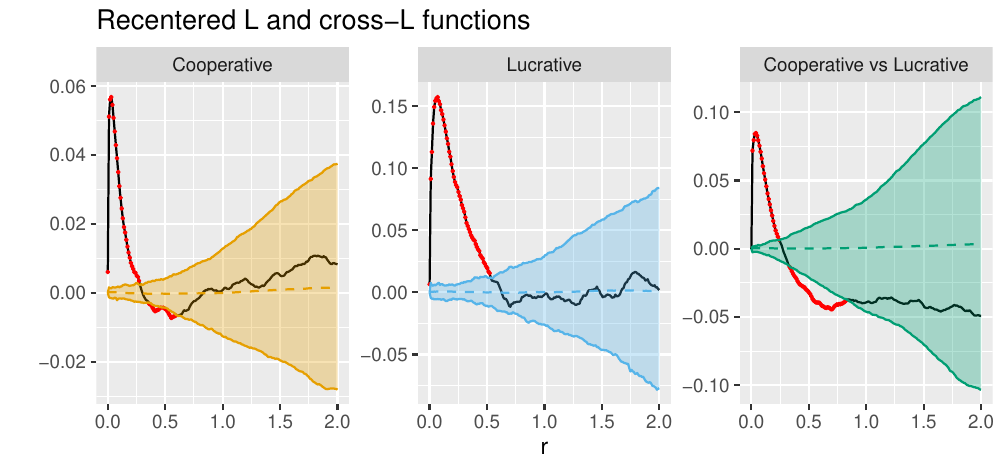}
    \caption{Solid lines represent estimates of $\tilde L_\bullet(r)$ (for $\bullet=c,\ell$) and $\tilde L_{c,\ell}(r)$ \citep{baddeley2015spatial}. Envelopes represent 95\% global envelopes constructed using the extreme rank length measure \citep{myllymaki2017global} under the inhomogeneous Poisson assumption. Red dots indicate distances for which a significant departure to the Poisson assumption is observed \rev{(the x-axis $r$ corresponds to a ball with radius $r$ with coordinates expressed in longitudes/latitudes)}.}
    \label{fig:L}
\end{figure}

\section{Parametric intensity estimation} \label{sec:parametric}

We now turn to the main core of this study which is to model $\rho_\bullet$ for $\bullet=c,\ell$ using covariates observed at the living area level. To make this paper more general, we recode $1=c$ and $2=\ell$ respectively for cooperative and lucrative banks (so that we can consider higher dimension than bivariate point process).

For computational arguments we consider the following exponential family model for each function
\begin{equation}
\rho_i (x; \bbeta_i)  = d(x)\exp 
\Big\{
\bbeta_{i}^\top \bz_i(x)
\Big\}
\label{eq:int}    
\end{equation}
where $\bz_i(x) = \{z_{j}(x)\}_{j=1,\dots,b_i}$  represents the vector of spatial covariates at location $x\in W$ related to cooperative (for $i=1$) and lucrative banks (for $i=2$). These covariates are discussed later. The vector $\bbeta_i=\{(\bbeta_{i})_1,\dots,(\bbeta_i)_{b_i}\}^\top \in \mathbb R^{b_i}$ represents the parameter vector of interest for cooperative banks when $i=1$ and for lucrative banks when $i=2$. We let $(\bz_i)_1(x)=1$ for any $x$, so that $(\bbeta_i)_1$ represents the intercept term for the bank of type $i$. \rev{Finally, the covariate $d(\cdot)$ represents population density at location $x$. We include it as a baseline function}, as our primary objective is to compare the two types of banks with respect to socio-economic factors rather than population concentration.

We can also rewrite~\eqref{eq:int} as a marked point process intensity model
\begin{equation}
\rho [\{(x,i); \bbeta\}]  = d(x) \prod_{j=1}^2
 \exp 
\Big\{ \delta_{ij} \; \times \bbeta_j^\top \mathbf z_j (x)
\Big\}
\label{eq:marked}    
\end{equation}
where $x\in W$ and $i\in\{1,2\}$ where $\delta_{ij}=1$ when $i=j$ and 0 otherwise.


The  covariates we consider for our dataset application are \rev{identical for the two types of banks. These are} categorized into demographic, social and economic.  \rev{Note that all these covariates, collaected as a single INSEE file as of 01/01/2022), as well as the baseline population density $d(\cdot)$,  are piecewise-constant over space, as they are measured at the living area spatial resolution.}

\begin{itemize}
\item[(i)] \textit{demographic covariates}. \texttt{evolution}: \rev{average annual rate of change of the total resident population of the living area between 2014 and 2020. The next covariates are related to age, which is indeed a marker of banking behaviour linked to changes in working life. For example, people over the age of 65 tend to have more assets or savings than younger people. Conversely, young people often follow their parents' lead when opening their first bank account. The age criterion is also examined to understand the shift from physical to online banks (Ngau et al., 2023; Tesfom et al., 2011). These age variables are \texttt{prop16.24}, \texttt{prop25.64} and \texttt{prop65.more}: \rev{average annual rate of change of the total resident population of the living area between 2014 and 2020, in \%  between 16 and 24 years old, etc}. }To deal with these compositional data we consider the additive log ratio transform - alr \citep{aitchison1982statistical} taking the last variable as the reference one. The transformed variables are  \texttt{log(prop16.24/prop65.more)} and \texttt{log(prop25.64/prop65.more)}.
\end{itemize}
\begin{itemize}
\item[(ii)] {\it social covariates}. \rev{According to previous work, we select social covariates to qualify the income distribution (poverty and activity) and another to qualify social inequality (median and decile ratio) by local zone.
These two sets of covariates are chosen to explain the segmentation of bank clients and document the influence of the demand side. The first two variables are} \texttt{median}, \texttt{decile.ratio}: median, inter-decile ratio of the standard of living in the living area; \texttt{activity}, \texttt{non.activity}: \rev{labour-force activity rate, i.e. share of the working-age population (15–64) that is economically active (employed or unemployed but seeking work), in \%}; \texttt{poverty}, \texttt{non.poverty}: {poverty rates in the living area and their complement to 1}. Again we use the alr transform and consider  the  variables \texttt{log(activity/non.activity)} and \texttt{log(poverty/non.poverty)}.
\end{itemize}
\begin{itemize}
\item[(iii)] {\it economic covariates}. \rev{The third set of covariates, the economic covariates, aims to describe the types of businesses and forms of employment present in a given area. This information helps banks understand their customers better. Therefore, we select relevant economic covariates to help us qualify the area, such as the importance of SMEs and the types of work in the local economy.} \texttt{prop.industry}: {proportion of positions in industry}; \texttt{prop.public}: {proportion of positions in public administration, teaching, health or social action}; \texttt{prop.trade}: {proportion of positions in trade, transport or commercial sector}; \texttt{prop.agri}: {proportion of workers in the farming sector plus proportion of positions in the construction sector} \footnote{The  notation \texttt{prop.agri} is justified by the fact that the proportion of positions in the construction sector is negligible compared to the proportion of workers in the farming sector}. These four variables sum to 1 and are thus transformed using the alr transform, which yields the three variables \texttt{log(prop.industry/prop.agri)}, \texttt{log(prop.public/prop.agri)} and 
\texttt{log(prop.trade/prop.agri)}; the last two covariates are \texttt{prop.1to9} and \texttt{prop10.more}: {proportion of companies with 9 or less employees, resp. more than 10 employees}. These covariates are transformed using the log ratio transform into the single covariate \texttt{log(prop.1to9/prop10.more)}
\end{itemize}

Thus we come up with three demographic covariates, four social ones and four economic ones,  illustrated in Figure~\ref{fig:cov}. To estimate the intensity model~\eqref{eq:marked}, we have therefore for each mark 1 and 2, one intercept parameter and eleven parameters corresponding to spatial covariates. Therefore, $\bbeta_i \in \mathbb R^{b_i}$ with $b_i=12$ for $i=1,2$ and so $\bbeta\in \mathbb R^p$ with $p=\sum b_i=24$. It is worth mentioning that we consider the same set of covariates for both marks but the methodology presented hereafter could be reproduced for more general situations.

\begin{figure}[htbp]
    \centering
    \includegraphics[width=\textwidth]{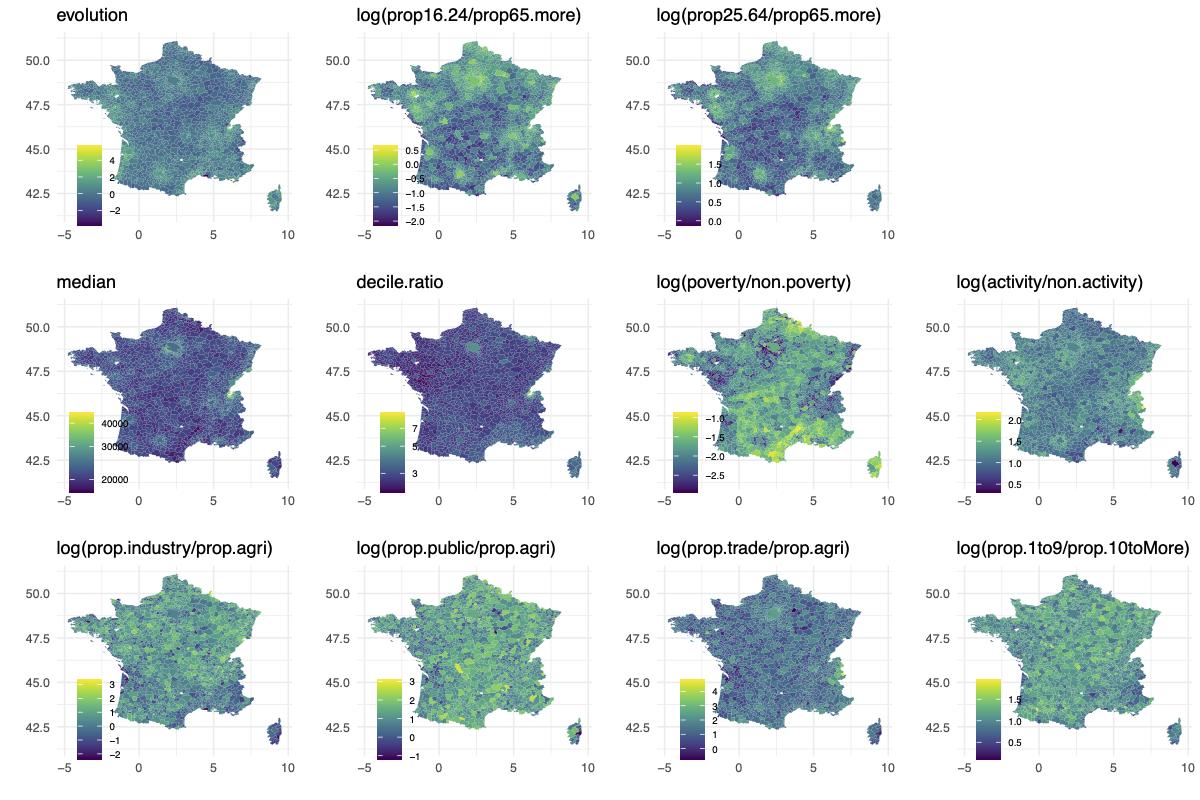}
    \caption{Spatial representation of covariates described in Section~\ref{sec:model} in mainland France. The spatial resolution is the living area (1681 living areas).}
    \label{fig:cov}
\end{figure}

\rev{The exploratory non parametric study of Section~\ref{sec:nonparametric} and in particular the analysis of the recentered $L$ and cross-$L$ functions have shown that the bivariate point pattern exhibits a significant departure from a bivariate inhomogeneous Poisson point process. We do not intend to model $(\bX_c,\bX_\ell)$, even its second-order structure. So, the likelihood of $(\bX_c,\bX_\ell)$ is not available. To sum up, our goal is to estimate $\bbeta_c$ and $\bbeta_\ell$ which characterize the intensities of $\bX_c$ and $\bX_\ell$ without neglecting that the points of $\bX_c$ (resp. $\bX_\ell$, resp. $(\bX_c,\bX_\ell)$) are dependent.}

These comments lead us to consider a composite likelihood method to estimate $\bbeta_c$ and $\bbeta_\ell$.
Univariate composite likelihoods for intensity estimation has a long history \citep{waagepetersen2009two, coeurjolly2019understanding,choiruddin2021information}. The simplest method is to use the Poisson likelihood which remains efficient even if the underlying process is not the Poisson model (this is the current situation here). Recent developments include regularized techniques \citep{choiruddin2018convex,choiruddin2023kppmenet} which can be applied in high-dimensional context (using a Lasso-type regularization) or with strongly correlated covariates (using a ridge penalty for instance). A contribution of the present paper is to investigate multivariate regularized estimation.

A separate exploratory analysis on the covariates has shown first that all of them seem to bring some information in the explanation of the presence/absence of a bank in a given living area and second that several covariates are highly correlated. Also, it could make sense to impose a condition on a whole set of parameters and in particular on covariates belonging to the same group, for example by imposing that the estimates for one group are shrunk to zero while keeping the possibility to select or not a feature using a Lasso type penalty. \rev{The possible multicollinearity and group structure of spatial covariates can be considered through a specific penalty term referred to as multivariate sparse group Lasso in the literature for generalized linear models (GLMs) \citep{li2015multivariate,liang2022sparsegl}. Our objective is to introduce a similar penalization to estimate the parameters of the bivariate model~\eqref{eq:marked}.}

\rev{In the current application we consider three groups of features}: covariates (i), (ii), (iii)  for lucrative and cooperative banks, so in total six groups. As a consequence, we propose to estimate {$\bbeta=(\bbeta_c^\top,\bbeta_\ell^\top)^\top = (\bbeta_1^\top,\bbeta_2^\top)^\top $} as the maximum of $Q(\bbeta)$, viewed as a penalized version of $\log$-Poisson likelihood of the bivariate point process $\mathbf X= (\mathbf X_c,\mathbf X_\ell)$. The composite likelihood denoted by $\ell(\bbeta)$ is given by 
\begin{align}
\ell(\bbeta)&= \sum_{i=1}^2 
\left\{
\sum_{x \in \mathbf X_i} \log \rho_i(x ; \bbeta_i) - \int_W \rho_i(x ; \bbeta_i) \mathrm d x 
\right\} \nonumber\\
&= 
\sum_{y\in \bY} \log \rho(y;\bbeta) - \int_{W\times \mathcal M} \rho(y;\bbeta) \dd y \label{eq:lik}
\end{align}
where $\bY=\{(\bX,i)_j,j=1,\dots\}$ represents the marked point process of all banks with type $i=1,2$, where $\mathcal M=\{1,2\}$ and where we abuse notation by denoting by $\dd y$ the Lebesgue measure times the counting one. Then, the normalized and regularized version $Q(\bbeta)$ are given by
\begin{equation}
    \label{eq:method}
Q(\bbeta) = \mu^{-1} \, \ell(\bbeta)
\quad -    \sum_{g=1}^G {\lambda^g} \|\bbeta^{g}\| -  \sum_{l=1}^p  \lambda_{l} |\beta_{l}|.
\end{equation}
In the definition of $Q$, we normalize $\ell(\bbeta)$ by $\mu = \E\{N(W\times \mathcal M)\} = \int_{W\times \mathcal M} \rho(y;\bbeta^\star)\dd y$ where $\bbeta^\star$ stands for the true parameter vector. The quantity $\mu$ plays the role of the sample size in this marked point process framework. Normalizing the composite likelihood $\ell$ by the sample size is also the convention considered by~\citet{friedman2010regularization} for GLMs. The notation $G$ stands for the number of groups. \rev{In our study, we consider covariate-based group structure, leading to $G=6$. The first two groups are demographic covariates for cooperative ($i=1$) and lucrative banks ($i=2$), the second two groups are social covariates for each type of bank, and the last two groups are economic covariates for each type of bank. In more details, $\bbeta^g$ is the vector of parameters corresponding to covariates of group $g$. For instance, $\bbeta^1=\{(\bbeta_{1})_j, j=2,\dots,4\}$ (the parameter vector for the group of demographic covariates for the cooperative banks) , $\bbeta^2=\{(\bbeta_{2})_j, j=2,\dots,4\}$, \dots, $\bbeta^6=\{(\bbeta_{2})_j, j=9,\dots,12\}$ (the parameter vector for the group of economic covariates for the lucrative banks).}
The parameters $\lambda^g$ and $\lambda_{l}$ are non-negative real numbers called regularization parameters. In~\eqref{eq:method}, the first penalty term tends to shrink all covariates belonging to the same group while the second one corresponds to the more standard adaptive Lasso penalty. We agree that from a practical point of view $\mu$ is unknown but it is worth pointing out that  maximizing $Q$ is equivalent to maximizing $\mu Q$ with $\lambda^g, \lambda_l$ replaced by new parameters $\check\lambda^g=\mu \lambda^g$ and $\check\lambda_l= \mu \lambda_l$.

Since the intensity function $\rho$ has an exponential form with baseline function $d(\cdot)$, then by exploiting the fact that all covariates are observed at the living area level (remind that the number of living areas is denoted by $J$), we have the following proposition.


\begin{proposition}\label{prop}
Maximizing $\ell(\bbeta)$ is equivalent to maximizing the $\log$-likelihood of two independent Poisson regressions with length $J$, with the canonical log link and with offset term $(\log \nu_j)_{j=1,\dots,J}$ where $\nu_j$ is the total number of inhabitants in the $j$th living area. Therefore, the maximization of $Q$ results in the maximization of a sparse group Lasso bivariate Poisson regression.
\end{proposition}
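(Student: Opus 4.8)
The plan is to exploit the piecewise-constant structure of the covariates to collapse both terms of the composite likelihood $\ell(\bbeta)$ in~\eqref{eq:lik} into finite sums indexed by the $J$ living areas, and then to recognise the resulting expression as a Poisson GLM log-likelihood. Let $A_1,\dots,A_J$ denote the living areas, which partition $W$; write $\bz_{i,j}$ for the common value of $\bz_i(\cdot)$ on $A_j$ and $n_{i,j}$ for the observed number of banks of type $i$ in $A_j$. Since the $\bbeta$-blocks for the two types are disjoint and the summand in~\eqref{eq:lik} splits additively over $i=1,2$, it suffices to treat each type separately and then recombine.

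First I would handle the integral term. Substituting the exponential form~\eqref{eq:int} and using that $\bz_i$ is constant on each $A_j$ gives
\[
\int_W \rho_i(x;\bbeta_i)\,\dd x = \sum_{j=1}^J \exp\!\big\{\bbeta_i^\top \bz_{i,j}\big\}\, \int_{A_j} d(x)\,\dd x.
\]
The crux of the argument is the identity $\int_{A_j} d(x)\,\dd x = \nu_j$: integrating the population-density baseline over a living area returns exactly its inhabitant count. This is where the offset originates, and it is the only genuinely non-bookkeeping step. Hence the integral term equals $\sum_{j=1}^J \nu_j \exp\{\bbeta_i^\top \bz_{i,j}\}$.

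Next I would treat the sum over points. Writing $\log\rho_i(x;\bbeta_i)=\log d(x)+\bbeta_i^\top\bz_i(x)$ and grouping the points of $\mathbf X_i$ by the area containing them yields
\[
\sum_{x\in\mathbf X_i}\log\rho_i(x;\bbeta_i) = \sum_{x\in\mathbf X_i}\log d(x) + \sum_{j=1}^J n_{i,j}\,\bbeta_i^\top\bz_{i,j}.
\]
The first sum does not involve $\bbeta_i$ and may be discarded for the purpose of maximisation. Combining with the integral term, up to an additive $\bbeta$-free constant,
\[
\ell(\bbeta) \equiv \sum_{i=1}^2\sum_{j=1}^J\Big\{ n_{i,j}\,\bbeta_i^\top\bz_{i,j} - \nu_j\exp\big(\bbeta_i^\top\bz_{i,j}\big)\Big\}.
\]

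Finally I would match this against a Poisson GLM. Treating the counts $n_{i,j}$, $j=1,\dots,J$, as Poisson responses with mean $\mu_{i,j}=\nu_j\exp(\bbeta_i^\top\bz_{i,j})$, i.e.\ with canonical log link $\log\mu_{i,j}=\log\nu_j+\bbeta_i^\top\bz_{i,j}$ and offset $\log\nu_j$, the Poisson log-likelihood $\sum_j\{n_{i,j}\log\mu_{i,j}-\mu_{i,j}-\log(n_{i,j}!)\}$ agrees with the displayed type-$i$ summand after dropping the $\bbeta$-free terms $n_{i,j}\log\nu_j$ and $\log(n_{i,j}!)$. Because $\bbeta_1$ and $\bbeta_2$ enter separately, this is precisely two independent Poisson regressions of length $J$ with offset $(\log\nu_j)_{j=1,\dots,J}$, proving the first claim. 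The second assertion is then immediate: the two penalty terms in~\eqref{eq:method} are left untouched by this reparametrisation, so maximising $Q$ over $\bbeta$ coincides with fitting a sparse group Lasso penalised bivariate Poisson regression. I expect no serious obstacle here, as the argument is essentially a reduction; the only point requiring care beyond routine algebra is verifying that every discarded term is genuinely independent of $\bbeta$ and that the density-to-count identity $\int_{A_j} d = \nu_j$ holds exactly at the living-area resolution.
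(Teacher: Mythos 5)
Your proposal is correct and follows essentially the same route as the paper's proof: decompose $W$ into the living areas, use the piecewise-constant structure of $d$ and the covariates together with the identity $\int_{\mathcal A_j} d(x)\,\dd x = d_j|\mathcal A_j| = \nu_j$ to collapse $\ell(\bbeta)$ into $\sum_{i}\sum_{j}\{N_{i,j}\,\bbeta_i^\top\bz_{i,j}-\nu_j\exp(\bbeta_i^\top\bz_{i,j})\}$ up to $\bbeta$-free terms, and then recognise this as two independent Poisson regressions with offset $\log\nu_j$. If anything, you are slightly more explicit than the paper in verifying that the discarded terms ($\sum_x \log d(x)$, $N_{i,j}\log\nu_j$, and the factorials) are independent of $\bbeta$, which is a welcome touch of care but not a different argument.
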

A consequence of Proposition~\ref{prop} is that we can follow~\citet{friedman2010regularization} and~\citet{liang2022sparsegl}, which implement various (adaptive) regularized versions of GLMs with exponential family models using coordinate descent algorithm.
Finally, our asymptotic result, namely Theorem~\ref{thm:convergence}, presented in Appendix~\ref{app:convergence} shows that the estimate $\hat \bbeta$ maximizing $Q(\bbeta)$ is, on the one hand, consistent when the average number of points $\mu$ increases and, on the other hand, satisfies an oracle property \citep[e.g.][]{fan2001variable}, that is $\mathrm P(\hat \beta_l=0) \to 1$ for any $l$ such that $\beta_l=0$. In other words, the procedure correctly identifies the covariates which are non informative.

\rev{For the tuning of regularization parameters $\lambda^g$ and $\lambda_l$, we extend a strategy developed by~\citet{zou2009adaptive,zhang2010regularization} and in particular used by~\citet{choiruddin2018convex, choiruddin2023adaptive} in the adaptive lasso case, combined with the default implementation of the sparse group lasso by~\cite{liang2022sparsegl}. We let $\lambda^g = (1-\alpha)\lambda / \|\hat{\bbeta}^g_{\mathrm{no}}\|$ and $\lambda_l=\alpha\lambda/|(\hat{\bbeta}_{\mathrm{no}})_l|$ where $\hat{\bbeta}_{\mathrm{no}}$ is a preliminary estimate of $\bbeta$, obtained without any regularization term. As recommended by~\citet{liang2022sparsegl}, we let $\alpha=0.05$. The last parameter $\lambda$ is obtained by minimizing the BIC criterion \citep[see][]{choiruddin2018convex}.}

\section{Results and discussion} \label{sec:results}

\rev{We randomly divide data into two sets (training set and validation set) with length $n/3$ and $2n/3$. The training dataset is used to tune the parameter $\lambda$. The validation set is used to perform the regularized composite likelihood with the tuned $\lambda$. This step allows us to identify seven covariates as non-informative. These are  represented by dots centered at 0 in Figure~\ref{fig:coeff}: \texttt{evolution} for cooperative and lucrative banks, \texttt{log(activity/non.activity)}, \texttt{log(prop.industry/prop.agri)} and\linebreak \texttt{log(prop1to9/prop.10toMore)}, for lucrative banks. The lack of evidence for the variable \texttt{evolution}  confirms the findings of previous studies, which suggest that the spatialization of banks is no longer demand-driven. Similarly, the distinction between agricultural and industrial workers is no longer relevant, given the decline in the number of agricultural workers in France. In the second step, we fit the bivariate composite likelihood to the entire dataset (training and validation) using only the covariates selected in the first step. We estimate the standard errors of the remaining 24-7=17 covariates using the covariance estimator developped by~\citet{coeurjolly2014covariance} and construct 90\% asymptotic confidence intervals (assuming a Gaussian distribution) for all parameters.}


\begin{figure}[htbp]
\centering
\includegraphics[width=.9\textwidth]{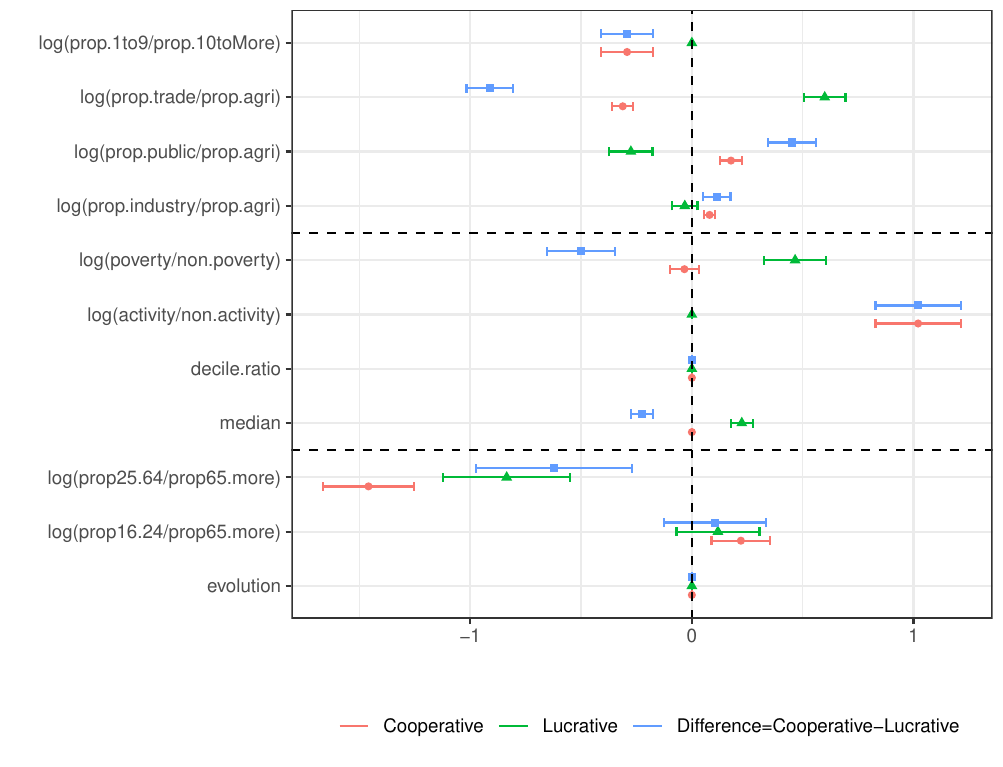}
\caption{Dots represent point estimates of parameters for each variable and each type of bank\rev{, as well as the difference between these two estimates}. Segments correspond to 90\% (asymptotic) confidence intervals. Dots centered at 0 mean that the related covariate is not selected by the sparse group lasso procedure. \rev{Estimates of intercept parameters are not represented.}}
\label{fig:coeff}
\end{figure}



\rev{Let us comment Figure~\ref{fig:coeff}.} Demographic covariates facilitate analysis of potential differences based on household demand. The results indicate minimal differentiation overall. Both types of banks exhibit similar trends across all variables. Population trends do not reveal significant differences. The proportion of the working population relative to the senior population \rev{(i.e. the covariate \texttt{log(prop25.64/prop65.more)})} negatively affects both types of banks, with a stronger \rev{significant} effect observed for cooperative banks. Additionally, the proportion of young people within the senior population \rev{(i.e. the covariate \texttt{log(prop16.24/prop65.more)})} is \rev{not significantly} more positively associated with cooperative banks.

\rev{Social covariates capture demand-side segmentation through three complementary dimensions of income — its level (median), its dispersion (\texttt{decile.ratio}), and labour-market participation (\texttt{log(activity/non.activity)}) — together with the local poverty rate (\texttt{log(poverty/non.poverty)}). This social dimension remains a source of differentiation between the two networks.  The activity rate for lucrative banks is insensitive to \texttt{log(activity/non.activity)}  whereas it has a positive effect for cooperative banks, consistent with their historical anchoring in working, member-based communities. Lucrative banks respond significantly to the level of income (\texttt{median}) and to the poverty rate (\texttt{log(poverty/non.poverty)}). 
The cooperative-lucrative 
differences confirm that income distribution and labour-market social covariates capture demand-side segmentation through three complementary dimensions of income — its level (\texttt{median}), its dispersion (\texttt{decile.ratio}), and labour-market participation (\texttt{log(activity/non.activity)}) — together with the local poverty rate (\texttt{log(poverty/non.poverty)}).}

Finally, economic covariates reveal more significant differences. The proportion of employees in industry versus agriculture \rev{(covariate \texttt{log(prop.industry/prop.agri)})} is not a \rev{significant} differentiating factor. While this is important for cooperative banks, the decline of agriculture in the French economy may explain why this variable has little influence. The proportion of SMEs \rev{(covariate \texttt{log(prop.1to9/prop.10toMore)})} distinguishes between the two networks: it has \rev{no} influence on profit-making banks, but a negative influence on cooperative banks. The two types of bank have opposite effects on the final two variables \rev{(covariates \texttt{log(prop.public/prop.agri)} and \texttt{log(prop.trade/prop.agri)})}. Profit-making banks are positively influenced by the presence of employees in the trade sector, while cooperative banks are negatively influenced. The latter are positively influenced by employees in the public sector.

These results demonstrate that, in line with our hypotheses, the spatialization of two types of banks can still be explained by customer segmentation on the demand side. Demographic distribution is no longer relevant, but the distribution of household income, the composition of the productive fabric, and job distribution remain informative and are sources of differentiation. 
However, these results do not corroborate our hypotheses regarding the supply side, in contrast to the demand side findings. The introduction of new parameters has instead solidified our understanding of how the spatialization of these two types of banks differs beyond what our supply-side hypotheses predicted.

We end this section with Figure~\ref{fig:fitint} which represents the fitted intensities for cooperative and lucrative banks \rev{as well as estimated relative risks, that is maps of\linebreak $\rho_\bullet(\cdot; \hat \bbeta_\bullet)/ \{\rho_c(\cdot;\hat\bbeta_c)+\rho_\ell(\cdot;\hat\bbeta_\ell)\}$ for $\bullet=c,\ell$. These figures confirm} our main hypothesis of different spatialization between the two types of bank. Integrating parametric dimensions enables us to link this differentiation to elements of demand in relation to income distribution, employment characteristics, and the local productive fabric.

\begin{figure}[htbp]
\centering
\includegraphics[width=.9\textwidth]{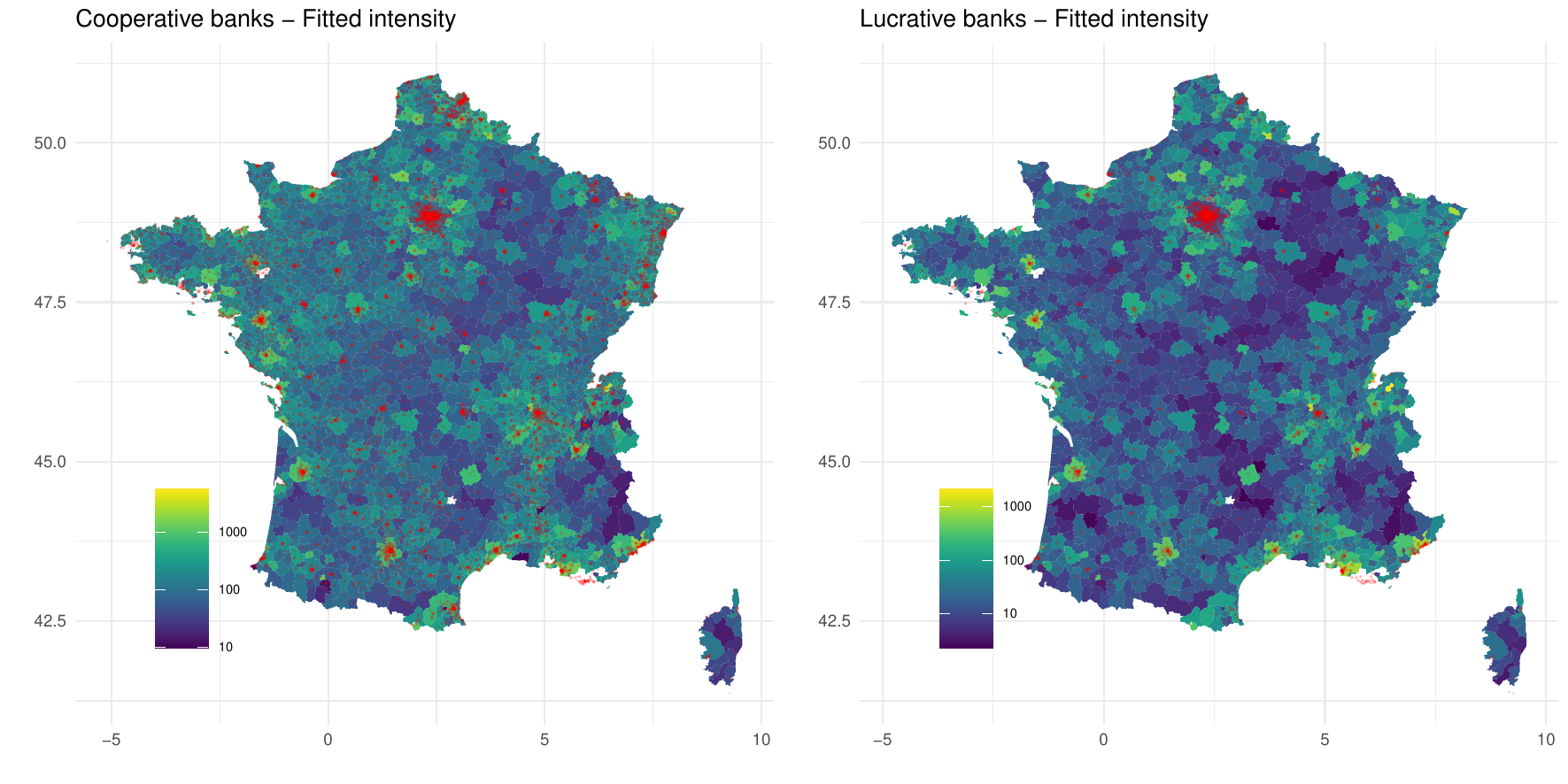}\\
\includegraphics[width=.9\textwidth]{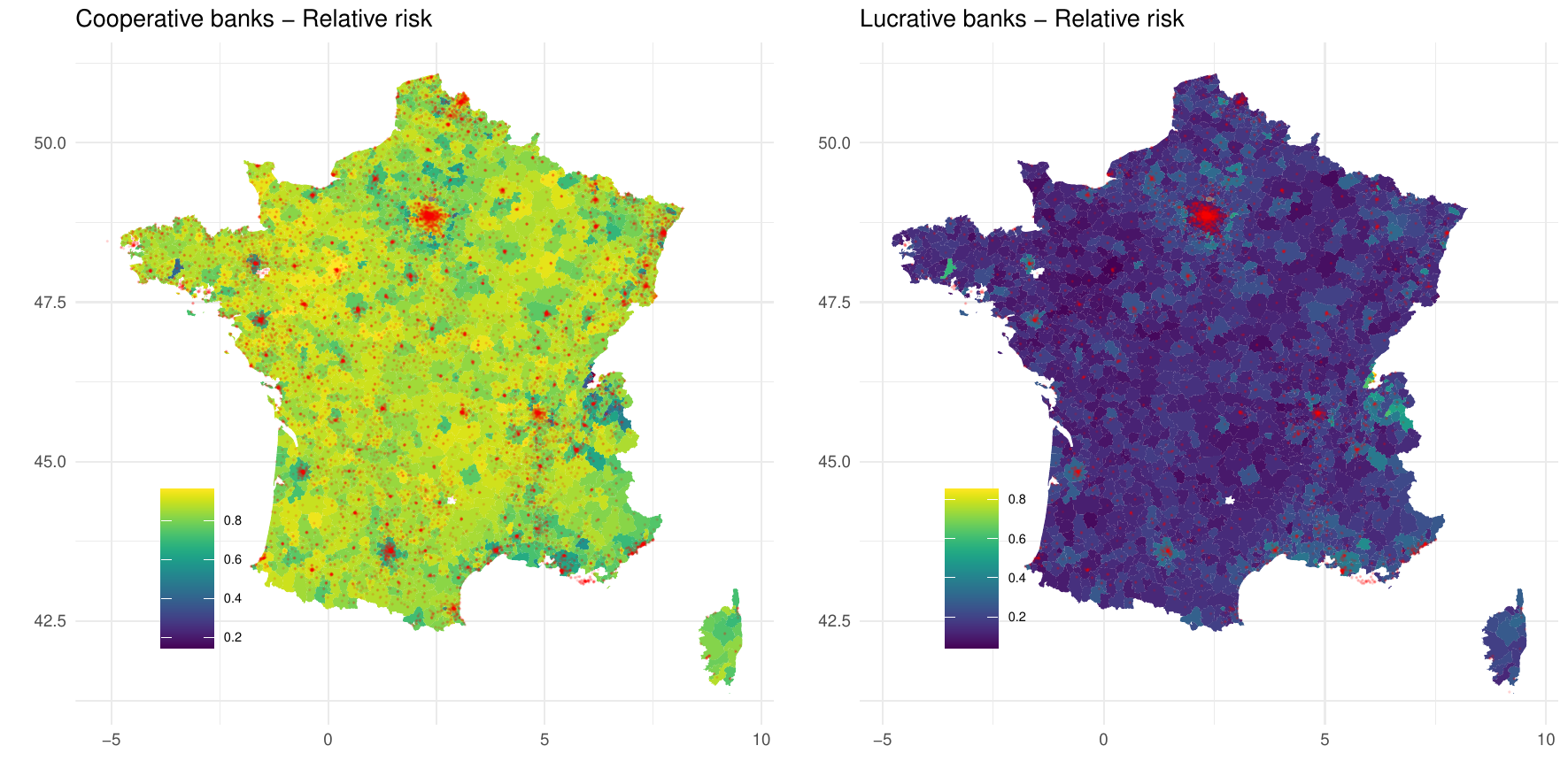}
\caption{\rev{Top row: fitted intensities, i.e. maps of $\rho_\bullet(\cdot;\hat \bbeta_\bullet)$; bottom row: estimated relative risks maps, i.e. maps of $\rho_\bullet(\cdot; \hat \bbeta_\bullet)/ \{\rho_c(\cdot;\hat\bbeta_c)+\rho_\ell(\cdot;\hat\bbeta_\ell)\}$ for $\bullet=c,\ell$. The intensity model corresponds to~\eqref{eq:method} and the parameters are estimated using the sparse group lasso procedure. Red dots correspond to locations of cooperative or lucrative banks.}} 
\label{fig:fitint}
\end{figure}

\section{Conclusion}

The following discussion  addresses the gaps that have been identified in the extant literature, specifically with regard to the intersection between the literature on the evolution of the banking sector and that on banking geography. The present study has been undertaken to examine the question of whether, by studying the spatialization of cooperative and profit-making banks, it is still possible to observe differences and test their origins. The expansion of digital commercial offerings by banking institutions does not appear to have resulted in a significant alteration of their spatialization. The search for proximity, materialised by the presence of branches, remains essential for resolving information asymmetries and producing idiosyncratic knowledge. 
The present contributions are twofold, in that they are both theoretical and methodological in nature. The demonstration is made of the persistence of differentiated spatialization between these two types of banks, which is explained by variables characterising the clienteles of these types of banks. For the first time, we propose a comparative spatial analysis based on unpublished data constructed during this research.
It was not possible to corroborate the hypothesis that organisational changes (supply logic) had an influence on the spatialization of banks. This remains an area of research that requires further consolidation. 

\rev{We identifty further research perspectives. First, it would be interesting to complete the present analysis (e.g. with a residual analysis) in order to seek for potential spatial missing confounders.}
\rev{Second, the exploratory non-parametric estimates of the intensity function revealed some areas of the country where many points are observed, in particular within the Région Ile-de-France. Focusing more specifically on urban areas, e.g. Paris, could be of great interest. This leaves the scope of the present study, since at the city level, it cannot be neglected that banks are living on a linear network (corresponding to streets). Point processes on linear networks are much more complicated to deal with from a computational point of view, see e.g. \citet{baddeley2017stationary}. We did not consider this specificity in this paper and have considered, as a first approximation, that banks could occur at any continuous location in mainland France.}
\rev{Third, our cross-sectional approach allows us to identify demographic, social, and economic factors linked to the observed regional differentiation between bank types. To further address temporal aspects, adding the age or establishment date of each branch as a covariate would help account for temporal layering. Incorporating such data could improve our understanding of the historical dynamics of branch creation and closure, making it a promising avenue for future research. }

\appendix

\section{Proof of Proposition~\ref{prop}} \label{app:prop}

\begin{proof}
Decompose the observation domain $W=\cup_{j=1}^J \mathcal A_j$ where $\mathcal A_j$ is the $j$th living area with volume $|\mathcal A_j|$. The baseline density function $d$ and all covariates are constant over $\mathcal A_j$ and we let $d_j=d(x)$, $\bz_1(x)=\bz_{1,j}$ and $\bz_2(x)=\bz_{2,j}$ for any $x\in \mathcal A_j$. Note that $\nu_j = d_j |\mathcal A_j|$. With these notation, we simply observe that the composite likelihood $\ell(\bbeta)$ can be rewritten as
\begin{align}
\ell(\bbeta) &= \sum_{i=1}^2 
\sum_{j=1}^J \left[
\sum_{x \in \bX_i \cap \mathcal A_j} \bbeta_i^\top \bz_{i}(x) - \int_{\mathcal A_j} d(x) \exp \left\{ \bbeta_i^\top \bz_{i}(x)\right\} \dd x
\right]\nonumber\\
&= \sum_{i=1}^2 
\sum_{j=1}^J \left\{
\sum_{x \in \bX_i \cap \mathcal A_j} \bbeta_i^\top \bz_{i,j} - \int_{\mathcal A_j} d_j \exp \left( \bbeta_i^\top \bz_{i,j}\right) 
\right\}
\nonumber\\
&= \sum_{i=1}^2 \sum_{j=1}^J 
\left\{
N_{i,j} \bbeta_i^\top \bz_{i,j} - \nu_j \exp(\bbeta_i^\top \bz_{i,j}) 
\right\} \nonumber\\
&= \sum_{i=1}^2 \sum_{j=1}^J
\left[
N_{i,j} \log \left\{\nu_j \exp( \bbeta_i^\top \bz_{i,j} ) \right\} - \nu_j \exp(\bbeta_i^\top \bz_{i,j}) 
\right] 
-\sum_{i=1}^2 \sum_{j=1}^J N_{i,j} \log \nu_j. \label{eq:glm}
\end{align}
Up to a term independent of $\bbeta$, the first double sum of~\eqref{eq:glm} precisely corresponds to the likelihood of two independent Poisson regressions with  the canonical link and with  offset term $(\log \nu_j)_{j=1,\dots,J}$.
\end{proof}

\section{Asymptotic result} \label{app:convergence}

We slightly extend the setting of Section~\ref{sec:parametric}. Assume we observe a multivariate point process $\bX=(\bX_1,\dots,\bX_M)$ with length $M$ (the application to the banks dataset corresponds to the case $M=2$) defined in $\mathbb R^d$ and observed in a bounded domain $W\subset \mathbb R^d$. Unambiguously, we can also consider $\bY$ the point process $\cup \bX_i$, marked with the random discrete mark in $\mathcal M=\{1,\dots,M\}$. We assume that the intensity of $\bY$ has the parametric form given by~\eqref{eq:int} where $d(\cdot)$ is some baseline function, $\bbeta_i, \bz_i(x) \in \mathbb R^{b_i}$ are respectively the parameter of interest and the vector of spatial covariates which can be either  piecewise constant or more general.

We let  $\bbeta=(\bbeta_1^\top,\dots,\bbeta_M^\top)^\top$ and $p= \sum_{i=1}^M b_i$ denote the full parameter vector and its length. For, $I\subseteq \{1,\dots,p\}$, we let $\bbeta_I = (\beta_l)_{l\in I}$.  Let $I_0=\{l : \beta_l=0\}$, $i_0=\#I_0$, $I_1=\{l : \beta_l\neq0\}$ and $i_1=\#I_1$. We also assume some extra information allowing us to group covariates $z_l$ for $l=1,\dots,p$. Consider the partition $\{1,\dots,p\} = \cup_{g=1}^G \mathcal G_g$ where $\mathcal G_g$ are disjoint sets with cardinality $\gamma_g$. We use the notation $\bbeta^g = \bbeta_{\mathcal G_g} = (\beta_l)_{l\in \mathcal G_g}$ and finally we let $\Gamma_0=\{g: \mathcal G_g\cap I_1=\emptyset\}$ and $\Gamma_1=\{1,\dots,G\}\setminus \Gamma_0$; $\Gamma_0$ represents the sets of groups for which all coefficients are inactive, ie equal to 0. We estimate $\bbeta^\star$, the true parameter vector with the sparse group lasso procedure given by~\eqref{eq:lik}-\eqref{eq:method}. The parameters $\lambda^g$ for $g=1,\dots,G$ and $\lambda_l$ for $l=1,\dots,p$ are non negative regularization parameters and we allow them to be stochastic. 

Our framework, stated by \ref{C:mu}, means that the average number of points increases with some index, say $n$. This setting includes the standard infill and increasing domain asymptotic frameworks, see~\citet{coeurjolly2025regularization} for a detailed discussion. In the rest of this section, $W$, $\mu$, $\bbeta$, $\bbeta^\star$, $\hat \bbeta$, $\ell(\bbeta)$, $Q(\bbeta)$, $p$, $i_0$, $i_1$, $\gamma_g$, $\ell$ depend on $n$. We avoid the notational dependence on $n$ to ease the reading. Our asymptotic results depend on the following conditions:
\begin{enumerate}[($\mathcal C$.1)]
\item $\mu\to \infty$ as $n\to \infty$. \label{C:mu}
\item $p^4/\mu \to 0$ as $n\to \infty$ and 
\begin{align*}
\sup_{x\in \R^d} |d(x)|<\infty, \qquad \sup_{n \ge 1} \sup_{l=1,\dots,p} \sup_{x\in \R^d} |z_l(x)|&<\infty,\\
 \int_{W\times \mathcal M} \int_{W\times \mathcal M} \rho(y;\bbeta)\rho(y^\prime;\bbeta) [g(y,y^\prime)-1]\dd y \dd y^\prime &= O(\mu) \\
\inf_{n\ge 1} \inf_{\bphi \in \R^p} \mu^{-1} \bphi^\top \{ - \ell^{(2)}(\bbeta^\star)\}\bphi&>0 \\
\sum_{i=1}^M\int_{W} \|\bz(x)\|^q \rho\{(x,i);\bbeta\} \dd x&= O_\P(p^{q/2} \mu)
\end{align*}
\label{C:MAS}
for $q=1,3$ and for any $\bbeta \in \R^p$ such that  $\|\bbeta - \bbeta^\star\|=O_\P(\sqrt{p/\mu})$. \rev{In the second condition, the function $g$ stands for the pair correlation function, given by $g(y,y^\prime)=\rho^{(2)}(y,y^\prime)/(\rho(y)\rho(y^\prime))$, with $\rho^{(2)}$ the second-order intensity function, see \citet{moller2003statistical}.}
\item As $n\to \infty$
$$
\left( \max_{l\in I_1} \lambda_l\right) \sqrt{\frac{i_1 \mu}{p}} =o_\P(1)
\qquad \text{ and } \qquad
\left( \max_{g\in \Gamma_1} \lambda^g\right) \sqrt{\frac{\left( \sum_{g\in \Gamma_1} \gamma_g\right) \mu}{p}} =o_\P(1).
$$ \label{C:sgl1}
\item As $n \to \infty$
$$
\left(\frac{1}{\inf_{l\in I_0} \lambda_l} \right) \; \sqrt{\frac{p^2}{\mu}} = o_\P(1).
$$ \label{C:sgl2}
\end{enumerate}

\rev{Condition~\ref{C:mu} expresses what we mean by increasing the amount of data. We assume that the mean number of points is a sequence tending to $\infty$ as $n\to \infty$. Condition~\ref{C:MAS} is very specific to the composite likelihood $\ell(\bbeta)$. This condition is an extension of the one considered by ~\citet{coeurjolly2025regularization} for univariate Poisson likelihood. It allows us to prove Lemma~\ref{lem} establishing the consistency of $\ell^{(1)}(\bbeta^\star)$. The first part of Condition~\ref{C:MAS}, $p^4/\mu\to0$, is probably the most informative: it means that the number of covariates can diverge to $\infty$ but not too quickly. Finally, Conditions~\ref{C:sgl1}-\ref{C:sgl2} are specific to the sparse group lasso penalty introduced in this paper and provide the necessary compromise to be considered, in order to ensure first, the consistency of $\hat \bbeta$ and second, an oracle type property. These conditions are extensions of the conditions on the regularization parameters imposed by~\citet{coeurjolly2025regularization} to prove that the univariate Poisson likelihood with adaptive Lasso penalization, for instance, leads to a consistent estimator and to a procedure which identifies informative covariates.}

\begin{lemma}
\label{lem} 
Under \ref{C:mu}-\ref{C:MAS},  the following statements hold. \\
(i) For any $\bbeta\in \R^p$ such that $\|\bbeta-\bbeta^\star\|=O_\P(\sqrt{p/\mu})$
\begin{equation}
\|\ell^{(1)}(\bbeta^\star)\| = O_\P(\sqrt{p\mu}) \quad 
\text{ and } \quad 
\frac{\partial \ell}{\partial \beta_l} (\bbeta)= O_\P(p\sqrt \mu). \label{eq:MAS1}
\end{equation}
(ii) For any $\bk\in \R^p$ (with $\|\bk\|\le K<\infty$) and $t\in (0,1)$,
\begin{equation}
\mu^{-1} \, \bk^\top \{-\ell^{(2)}(\bbeta^\star)\} \bk \ge \alpha  
\quad \text{ and } \quad 
\mu^{-1} \bk^\top \left\{ \ell^{(2)}(\bbeta^\star+t \bk \sqrt{p/\mu}) - \ell^{(2)}(\bbeta^\star)\right\} \bk =o_\P(1). \label{eq:MAS2}    
\end{equation}
\end{lemma}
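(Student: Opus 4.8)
The plan is to exploit the exponential form \eqref{eq:marked}, which makes both the score and the Hessian of $\ell$ explicit and, crucially, block-separable across marks. Writing $y=(x,i)$, we have $\log\rho(y;\bbeta)=\log d(x)+\bbeta_i^\top\bz_i(x)$, so the block of the score indexed by mark $i$ is
$$
\frac{\partial\ell}{\partial\bbeta_i}(\bbeta)=\sum_{x\in\bX_i}\bz_i(x)-\int_W\bz_i(x)\,\rho_i(x;\bbeta_i)\,\dd x,
$$
while the Hessian is block-diagonal with $i$th block $-\int_W\bz_i(x)\bz_i(x)^\top\rho_i(x;\bbeta_i)\,\dd x$ (the data term drops out because it is linear in $\bbeta_i$). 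These two formulas drive the whole proof.

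For the first bound in (i), I would first note that the Campbell formula gives $\E\{\ell^{(1)}(\bbeta^\star)\}=0$, so $\|\ell^{(1)}(\bbeta^\star)\|$ is controlled by its second moment $\E\|\ell^{(1)}(\bbeta^\star)\|^2=\sum_{l=1}^p\Var\{\partial\ell/\partial\beta_l(\bbeta^\star)\}$. Each component variance is computed with the second-order Campbell theorem: it splits into a diagonal term $\int_W z_l^2\rho_i^\star\,\dd x$ and a pair-correlation term $\int\!\!\int z_l(x)z_l(x')\rho(y)\rho(y')[g(y,y')-1]\,\dd y\,\dd y'$. The boundedness of the covariates together with the third display of \ref{C:MAS} (which bounds exactly this pair-correlation integral by $O(\mu)$) and the intensity-integral bound show each variance is $O(\mu)$; summing over the $p$ components gives $\E\|\ell^{(1)}(\bbeta^\star)\|^2=O(p\mu)$, and Markov's inequality yields $\|\ell^{(1)}(\bbeta^\star)\|=O_\P(\sqrt{p\mu})$. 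For the second bound in (i) I would write, for $l$ in block $i$,
$$
\frac{\partial\ell}{\partial\beta_l}(\bbeta)=\frac{\partial\ell}{\partial\beta_l}(\bbeta^\star)-\int_W z_l(x)\{\rho_i(x;\bbeta_i)-\rho_i(x;\bbeta_i^\star)\}\,\dd x,
$$
bound the first term by $O_\P(\sqrt\mu)$ (a single coordinate of the score), and Taylor-expand the intensity difference, which is legitimate since the exponent $(\bbeta_i-\bbeta_i^\star)^\top\bz_i(x)=O_\P(p/\sqrt\mu)=o_\P(1)$. The leading contribution is $(\bbeta_i-\bbeta_i^\star)^\top\int_W z_l\,\bz_i\,\rho_i^\star\,\dd x$; the integral is a vector of $b_i=O(p)$ coordinates each $O(\mu)$, hence of norm $O(\sqrt p\,\mu)$, and Cauchy--Schwarz against $\|\bbeta_i-\bbeta_i^\star\|=O_\P(\sqrt{p/\mu})$ produces $O_\P(p\sqrt\mu)$, which dominates both the score coordinate and the expansion remainder (the latter controlled by the $q=3$ display of \ref{C:MAS}).

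For (ii), the first inequality is immediate: the block-diagonal Hessian $-\ell^{(2)}(\bbeta^\star)$ is a sum of positive semidefinite matrices weighted by $\rho_i^\star>0$, and the fourth display of \ref{C:MAS} is precisely the statement that its smallest eigenvalue, normalised by $\mu^{-1}$, stays bounded away from zero, furnishing the constant $\alpha$. For the second inequality I would use the exponential form to write the $i$th-block integrand difference as $\rho_i^\star(x)\{\exp(t\sqrt{p/\mu}\,\bk_i^\top\bz_i(x))-1\}$. Since $|\bk_i^\top\bz_i(x)|=O(\sqrt p)$ by boundedness and $\|\bk\|\le K$, the exponent is $O(p/\sqrt\mu)=o(1)$ under \ref{C:MAS}, so $\exp(\cdot)-1$ is, to leading order, $t\sqrt{p/\mu}\,\bk_i^\top\bz_i(x)$. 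Substituting gives a leading term proportional to $\mu^{-1}\sqrt{p/\mu}\int_W(\bk_i^\top\bz_i)^3\rho_i^\star\,\dd x$, which by the $q=3$ display of \ref{C:MAS} is $O(p^{3/2}\mu)$; the prefactor $\mu^{-1}\sqrt{p/\mu}$ then yields $O(p^2/\sqrt\mu)$, and since $p^4/\mu\to0$ this is $o_\P(1)$.

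The main obstacle is the careful bookkeeping of the powers of $p$ throughout: every scalar $\bk_i^\top\bz_i$ or vector $\int z_l\bz_i\rho_i^\star$ carries a hidden $\sqrt p$ from the dimension, and the conclusions hinge on these powers interacting correctly with the scaling $\sqrt{p/\mu}$ and with the moment bounds of \ref{C:MAS}. In particular, the second-order Campbell computation for the score variance and the uniform control of the Taylor remainders over the random $O_\P(\sqrt{p/\mu})$ ball are the steps where the pair-correlation integrability and the $q=3$ moment condition must be invoked with care; the sparsity-rate conditions \ref{C:sgl1}--\ref{C:sgl2} play no role here and are reserved for the consistency and oracle arguments built on top of this lemma.
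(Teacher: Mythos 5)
Your proposal is correct and follows essentially the same route as the paper's proof: Campbell's theorem plus the pair-correlation integral bound of \ref{C:MAS} for $\|\ell^{(1)}(\bbeta^\star)\|$, mean-value/Taylor bounds combined with the $q=1,3$ moment conditions for the drift and Hessian-increment terms, and the positive-definiteness display of \ref{C:MAS} for the lower bound in (ii). The only cosmetic difference is that for the second bound in \eqref{eq:MAS1} you anchor the decomposition at $\bbeta^\star$ (exploiting that the data term of the score is $\bbeta$-free), whereas the paper bounds the mean and variance of $\partial \ell/\partial \beta_l(\bbeta)$ directly; both reduce to the same two estimates.
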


\begin{proof}
(i) Using the marked point process formalism, the score function of the composite likelihood writes
$$
\ell^{(1)}(\bbeta) = 
\sum_{y=(x,i)\in \bY} \bz(x) - \sum_{i=1}^M\int_{W} \bz(x) \rho\{(x,i);\bbeta\}\dd x
$$
By Campbell theorem \citep{moller2003statistical}, $\E \left\{ \ell^{(1)}(\bbeta^\star\right\}=0$ and 
\begin{align*}
\Var \left\{ \ell^{(1)}(\bbeta^\star)\right\}=& 
\int_{W\times \mathcal M} \bz(x)\bz(x)^\top 
\rho(y,\bbeta^\star) \dd y \\
&+ 
\int_{W\times \mathcal M} \int_{W\times \mathcal M}
\bz(x) \bz(y)^\top
\rho(y;\bbeta^\star)\rho(y^\prime;\bbeta^\star) [g(y,y^\prime)-1]\dd y \dd y^\prime.
\end{align*}

By \ref{C:mu}-\ref{C:MAS}, we have that $\|\Var\left\{ \ell^{(1)}(\bbeta^\star)\right\}\|= O(p\mu)$ since in particular $\|\bz(x)\|^2=O(p)$ for any $x\in \R^d$. We deduce that $\|\ell^{(1)}(\bbeta^\star)\| = O_\P(\|\Var\left\{ \ell^{(1)}(\bbeta^\star)\right\}\|^{1/2}) = O_\P(\sqrt{p\mu}).$ 
For the second part of~\eqref{eq:MAS1}, using again Campbell Theorem, \ref{C:MAS} and Taylor expansion we have for some $\tilde \bbeta=\bbeta+t(\bbeta^\star-\bbeta)$ for $t\in(0,1)$
\begin{align*}
\left|\E \left\{\frac{\partial \ell}{\partial \beta_l} (\bbeta)\right\} \right| &\le \int_{W\times \mathcal M} 
\left| 
z_l(x) \left\{ \rho(y;\bbeta)-\rho(y;\bbeta^\star)\right\} 
\right|\dd y\\
&\le \kappa \int_{W\times \mathcal M} \|\bz(x)\| \; \|\bbeta-\bbeta^\star\| \rho(y ; \tilde\bbeta) \dd y \\
&= O_\P ( p \sqrt{\mu}).
\end{align*}
Using the same tools, we can check that 
$\Var \left\{
\frac{\partial \ell}{\partial \beta_l} (\bbeta)
\right\} = O(\mu)$ whereby we deduce the result.\\
(ii) The first part directly ensues from the last assumption of~\ref{C:MAS}. The sensitivity matrix at $\bbeta$ writes
$$  
-\ell^{(2)}(\bbeta) = \sum_{i=1}^M \int_W \bz(x)\bz(x)^\top \rho\{(x,i);\bbeta\} \dd x
$$
Let $\Delta(\bk)$ be the second part term of~\eqref{eq:MAS2}. Using a Taylor expansion and~\ref{C:MAS} we have, by denoting $\tilde \bbeta = \bbeta + s t \bk \sqrt{p/\mu}$ for $s\in(0,1)$
\begin{align*}
\|\Delta(\bk)\| &\le \mu^{-1}\sum_{i=1}^M \int_W \|\bz(x)\|^2 
\; \|st \bk \sqrt{p/\mu}\| 
\; \|\bz(x)\| \; \rho\{(x,i), \tilde \bbeta\} \dd x  \\
&\le K \mu^{-1}\sqrt{p/\mu} \sum_{i=1}^M\int_W \|\bz(x)\|^3 \rho\{(x,i), \tilde \bbeta\} \dd x \\
& = O_\P\left( \sqrt{\frac{p^4}{\mu}}\right) =o_\P(1).
\end{align*}
\end{proof}


\begin{theorem} \label{thm:convergence} Let $\hat \bbeta = \mathrm{argmax}_{\bbeta}  \; Q(\bbeta)$. Assume that  the conditions \ref{C:mu}-\ref{C:sgl2} hold, then (i) 
$\hat\bbeta -\bbeta^\star= O_{\P} ( \sqrt{p/\mu})$ and (ii) $\mathrm{P}(\hat \bbeta_{I_0}=0) \to 1$ as $n \to \infty$.
\end{theorem}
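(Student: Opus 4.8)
The plan is to prove (i) and (ii) separately, in both cases exploiting that $Q$ is concave (the composite log-likelihood $\ell$ is concave because each $\rho_i$ is log-linear in $\bbeta_i$, and the two penalties are convex), so that $\hat\bbeta$ is a global maximizer and the first-order subgradient conditions are necessary and sufficient. For part (i) I would set $\alpha_n=\sqrt{p/\mu}$ and use the classical localization device: it suffices to exhibit, for every $\epsilon>0$, a constant $C>0$ with
\[
\P\Big(\sup_{\|\bphi\|=C} Q(\bbeta^\star+\alpha_n\bphi) < Q(\bbeta^\star)\Big)\ge 1-\epsilon,
\]
since by concavity this confines $\hat\bbeta$ to the ball $\{\|\bbeta-\bbeta^\star\|\le C\alpha_n\}$. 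Taylor-expanding $\mu^{-1}\{\ell(\bbeta^\star+\alpha_n\bphi)-\ell(\bbeta^\star)\}$ to second order, the linear term is bounded by $\mu^{-1}\alpha_n\|\ell^{(1)}(\bbeta^\star)\|\,\|\bphi\|=O_\P(\alpha_n^2)\|\bphi\|$ by the first estimate of Lemma~\ref{lem}(i); the quadratic term equals $\tfrac12\alpha_n^2\mu^{-1}\bphi^\top\ell^{(2)}(\tilde\bbeta)\bphi$ for some intermediate $\tilde\bbeta$, which by the two statements of Lemma~\ref{lem}(ii) is at most $-\tfrac12\alpha\,\alpha_n^2\|\bphi\|^2+o_\P(\alpha_n^2\|\bphi\|^2)$.

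The genuinely new ingredient is the control of the two penalties. On the inactive coordinates ($l\in I_0$) and inactive groups ($g\in\Gamma_0$) the penalty vanishes at $\bbeta^\star$, so the increments are nonnegative and, entering $Q$ with a minus sign, only make the difference more negative. On the active part I would use the Lipschitz property of $|\cdot|$ and of $\|\cdot\|$ to bound the lasso increment by $\alpha_n(\max_{l\in I_1}\lambda_l)\sqrt{i_1}\,\|\bphi\|$ and the group increment by $\alpha_n(\max_{g\in\Gamma_1}\lambda^g)\sqrt{\textstyle\sum_{g\in\Gamma_1}\gamma_g}\,\|\bphi\|$, the latter via Cauchy--Schwarz over groups together with $|\Gamma_1|\le\sum_{g\in\Gamma_1}\gamma_g$; Condition~\ref{C:sgl1} makes both increments $o_\P(\alpha_n^2)\|\bphi\|$. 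Collecting terms, the difference is $-\tfrac12\alpha\,\alpha_n^2\|\bphi\|^2+O_\P(\alpha_n^2)\|\bphi\|+o_\P(\alpha_n^2\|\bphi\|^2)$, which is negative uniformly on $\|\bphi\|=C$ once $C$ is large, establishing (i).

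For part (ii) I would argue from the subgradient stationarity of the concave program at $\hat\bbeta$. Fix $l\in I_0$ and suppose $\hat\beta_l\neq0$, writing $g$ for its group; stationarity forces
\[
\mu^{-1}\frac{\partial\ell}{\partial\beta_l}(\hat\bbeta)=\lambda^{g}\,\frac{\hat\beta_l}{\|\hat\bbeta^{g}\|}+\lambda_l\,\mathrm{sign}(\hat\beta_l).
\]
Since $\hat\beta_l/\|\hat\bbeta^{g}\|$ and $\mathrm{sign}(\hat\beta_l)$ both carry the sign of $\hat\beta_l$, the two penalty contributions reinforce rather than cancel, so the right-hand side has magnitude at least $\lambda_l\ge\inf_{l\in I_0}\lambda_l$. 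By part (i) we have $\|\hat\bbeta-\bbeta^\star\|=O_\P(\alpha_n)$, so the second estimate of Lemma~\ref{lem}(i) applies at $\hat\bbeta$ and bounds the left-hand side by $O_\P(p/\sqrt\mu)$ simultaneously over $l\in I_0$. Condition~\ref{C:sgl2} states precisely that $\inf_{l\in I_0}\lambda_l$ dominates $p/\sqrt\mu$, so the displayed equality fails with probability tending to one, yielding $\P(\hat\bbeta_{I_0}=0)\to1$.

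The step I expect to be the main obstacle is making both penalties provably negligible against the quadratic curvature while keeping the group-counting factor honest: in part (i) one must carry $\sqrt{\sum_{g\in\Gamma_1}\gamma_g}$ through the Cauchy--Schwarz step and check it is exactly what Condition~\ref{C:sgl1} absorbs, and in part (ii) one must verify that the group-lasso and lasso subgradients point the same way, since if they could cancel then the lower bound $\lambda_l$ on the right-hand side—and hence the contradiction delivered by Condition~\ref{C:sgl2}—would be lost; a secondary nuisance is ensuring the score bound of Lemma~\ref{lem}(i) holds uniformly over the $i_0\le p$ coordinates of $I_0$.
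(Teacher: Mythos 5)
Your proposal is correct, and for part (i) it is essentially the paper's own argument: the same localization on the sphere of radius $\sqrt{p/\mu}$, the same Taylor expansion controlled by Lemma~\ref{lem}, and the same Lipschitz-plus-Cauchy--Schwarz bounds on the active penalty increments (including the count $\bigl(\sum_{g\in\Gamma_1}\gamma_g\bigr)^{1/2}$ obtained via $|\Gamma_1|\le\sum_{g\in\Gamma_1}\gamma_g$), which Condition~\ref{C:sgl1} absorbs exactly as you claim. Part (ii) is where you take a genuinely different, though equivalent, route: the paper follows the Fan--Li device, showing that for any $\bbeta$ with consistent active part and any $l\in I_0$, the derivative $\frac{\partial Q}{\partial \beta_l}(\bbeta)$ has sign opposite to $\beta_l$ throughout a $\kappa\sqrt{p/\mu}$-neighborhood of zero, so that no maximizer can sit at $\hat\beta_l\neq 0$; you instead impose the exact first-order (KKT) stationarity condition at $\hat\bbeta$ and derive a contradiction. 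The two arguments consume the same ingredients — the score bound $\frac{\partial \ell}{\partial \beta_l}(\bbeta)=O_\P(p\sqrt{\mu})$ near $\bbeta^\star$ from Lemma~\ref{lem}, the observation that the lasso and group-lasso subgradient terms carry the sign of $\beta_l$ and therefore reinforce rather than cancel (this is precisely what makes the paper's bound via $m(\bbeta)\ge \inf_{l\in I_0}\lambda_l$ work), and Condition~\ref{C:sgl2}. What your route buys is cleanliness: concavity of $Q$ (which indeed holds here, since $\ell$ is concave for the log-linear intensity model and both penalties are convex) makes stationarity at the global maximizer legitimate and also rigorously justifies the sphere-localization step in (i), which the paper leaves implicit. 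What the paper's route buys is robustness: the sign-of-derivative argument needs only a local maximizer and would survive a non-log-linear intensity specification, where $Q$ is no longer concave. Finally, the uniformity caveat you flag (applying Lemma~\ref{lem}'s pointwise score bound at a data-dependent point and simultaneously over the $i_0\le p$ coordinates of $I_0$) is shared by the paper's proof, so it is not a gap of your argument relative to the original.
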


\begin{proof}
(i) Let $\bk \in \R^p$ and $\Delta(\bk)= Q(\bbeta^\star+\sqrt{p/\mu})-Q(\bbeta^\star)$. The consistency result is proved if one proves that for given $\varepsilon>0$, there exists $K$ such that for $n$ large enough $\P\{\sup_{\|\bk\|=K} \Delta(\bk)\}\le \varepsilon$. Decompose $\Delta(\bk)=T_1+T_2+T_3$ with
\begin{align*}
T_1 &= \mu^{-1} \left\{ \ell(\bbeta^\star+\sqrt{p/\mu} \bk) -\ell(\bbeta^\star)\right\}\\
T_2 &= \sum_{g=1}^G \lambda^g \left( \|\bbeta^{g,\star}\| - \|\bbeta^{g,\star}+ \sqrt{p/\mu} \, \bk^g\|\right) \\
T_3&= \sum_{l=1}^p \lambda_l \left(
|\beta_l^\star|-|\beta_l^\star+  \sqrt{p/\mu}\, k_l|  
\right)
\end{align*}
where we use the notation $\bk=(k_l)_{l\in\{1,\dots,p\}}$ and $\bk^g=(k_l)_{l\in \mathcal G_g}$. From Lemma~\ref{lem}, there exists $\alpha>0$ (independent of $\bk$ and $n$) such that
$$
T_1 = \frac{1}\mu \sqrt{\frac{p}{\mu}} \bk^\top \ell^{(1)}(\bbeta^\star) - \frac{\alpha}2 \, \frac{p}{\mu} \, \|\bk\|^2 + R_1 
$$
for $n$ large enough, with $R_1=o_\P(p/\mu)$. Regarding the term $T_3$, we have
\begin{align*}
T_3 & \le \sum_{l\in I_1} \lambda_l \left(
 |\beta_l^\star|-
 |\beta_l^\star+  \sqrt{p/\mu}\, k_l| 
 \right)
 \\
& \le \left( \max_{l \in I_1} \lambda_l\right) \sqrt{\frac{p}\mu} \,  \sum_{l \in I_1} |k_l| \\
&\le \left( \max_{l \in I_1} \lambda_l\right) \sqrt{\frac{p}\mu} \, \sqrt{i_1} \, \|\bk\|.
\end{align*}
Finally, for the term $T_2$ we proceed similarly to derive (for $n$ large enough)
\begin{align*}
T_2 &\le \sum_{g \in \Gamma_1} \lambda^g \left(
\|\bbeta^g\| - \|\bbeta^{g,\star}+ \sqrt{p/\mu} \, \bk^g\|\right)\\
&\le 4 \left(\max_{g \in \Gamma_1} \lambda^g \right) \sqrt{\frac{p}\mu} \, \sum_{g\in \Gamma_1} \|\bk^g\| \\
& \le 4 \left(\max_{g \in \Gamma_1} \lambda^g \right) \sqrt{\frac{p}\mu} 
\left( \sum_{g\in \Gamma_1} \gamma_g\right)^{1/2} \, \|\bk\|.
\end{align*}
Thus for $n$ large enough
$$
\rev{\Delta(\bk)} \le \frac1\mu \, \sqrt{\frac{p}{\mu}} \, \bk^\top \ell^{(1)}(\bbeta^\star)\bk \, - \, \frac{\alpha}2 \, \frac{p}{\mu} \|\bk\|^2 \, + \, \omega
$$
with 
$$
\omega = R_1+4 K\left(\max_{g \in \Gamma_1} \lambda^g \right) 
\sqrt{\frac{\left(\sum_{g\in \Gamma_1} \gamma_g\right)p}{\mu}} 
\; +\;
K\left( \max_{l \in I_1} \lambda_l\right) \sqrt{\frac{i_1 p}\mu}
$$
whereby we deduce that for $n$ large enough
$$
\P \left\{ \sup_{\|\bk\|=K} \Delta(\bk)\right\} \le \P 
\left( L \ge \frac{\alpha}2 K\sqrt{p\mu}\right)
$$
where $L=\|\ell^{(1)}(\bbeta^\star)\| + \mu \sqrt{\frac{\mu}{p}} \omega$. By condition~\ref{C:sgl1}, $\omega=o_\P(p/\mu)$. This and~\eqref{eq:MAS1} show that $L=O_\P(\sqrt{p\mu})$ which ends the proof since we can choose $K$ large enough.\\

\noindent (ii) Following (i), the pioneering work by~\citet{fan2001variable} and the more recent work by~\citet{coeurjolly2025regularization} (for point process intensity estimation), the result is proved if one proves that for any $\bbeta\in \R^p$ such that $\|\bbeta_{I_1}-\bbeta_{I_1}^\star\|=O_\P(\sqrt{p/\mu})$, we have for any $l\in I_0, \kappa>0$ and $\varepsilon=\kappa \sqrt{p/\mu}$
$$
\frac{\partial Q}{\partial \beta_l}(\bbeta)<0 \; \text{ for } 0 <\beta_l<\varepsilon 
\quad \text{ and } \quad
\frac{\partial Q}{\partial \beta_l}(\bbeta)>0 \; \text{ for } -\varepsilon <\beta_l<0.
$$
We consider the first statement as the other one follows along similar lines. Let $\gamma:\{1,\dots,p\} \to \{1,\dots,G\}$ be the function given by 
$\gamma(l)=\sum_{g=1}^G g \mathbf 1(l\in \mathcal G_g)$. We have
\begin{align*}
\frac{\partial Q}{\partial \beta_l}(\bbeta) &= \frac1\mu \frac{\partial \ell}{\partial \beta_l}(\bbeta) - \lambda_l \; \mathrm{sign}(\beta_l) - \sum_{g=1}^G \lambda^g \frac{\beta_l}{\|\bbeta^g\|} \mathbf 1(l\in \mathcal G_g) \\
&= \frac1\mu \frac{\partial \ell}{\partial \beta_l}(\bbeta) - \lambda_l \; \mathrm{sign}(\beta_l) - \lambda^{\gamma(l)} \frac{\beta_l}{\|\bbeta^{\gamma(l)}\|} \\
&\ge \frac1\mu \frac{\partial \ell}{\partial \beta_l}(\bbeta) - m(\bbeta)
\end{align*}
where $m(\bbeta)=\inf_{l \in I_0} \left( \lambda_l + \lambda^{\gamma(l)}\frac{\beta_l}{\|\bbeta^{\gamma(l)}\|}\right)$. The  lower-bound of $m(\bbeta)$, uniform in $\bbeta$, is $\inf_{l\in I_0} \lambda_l$. Thus, we have
\begin{align*}
\P \left\{ \frac{\partial Q}{\partial \beta_l}(\bbeta) <0\right\}
&\ge \P \left\{
 \frac{\partial \ell}{\partial \beta_l}(\bbeta) \le \mu \; \inf_{l\in I_0}\lambda_l \right\}.
\end{align*}
From Lemma~\ref{lem} and in particular~\eqref{eq:MAS1}, the result follows if $(\inf_{l\in I_0} \lambda_l) \sqrt{\mu/p^2}\to \infty$, which ensues from condition~\ref{C:sgl2}.
\end{proof}

\bibliographystyle{abbrvnat} 
\bibliography{refs}

\begin{thebibliography}{45}
\providecommand{\natexlab}[1]{#1}
\providecommand{\url}[1]{\texttt{#1}}
\expandafter\ifx\csname urlstyle\endcsname\relax
  \providecommand{\doi}[1]{doi: #1}\else
  \providecommand{\doi}{doi: \begingroup \urlstyle{rm}\Url}\fi

\bibitem[Abramson(1982)]{abramson1982bandwidth}
I.~S. Abramson.
\newblock On bandwidth variation in kernel estimates-a square root law.
\newblock \emph{The Annals of Statistics}, 10\penalty0 (4):\penalty0
  1217--1223, 1982.

\bibitem[ACPR(2024)]{acpr}
A.~ACPR.
\newblock Rapport annuel.
\newblock Technical report, Banque de France, 2024.

\bibitem[Agarwal and Hauswald(2010)]{agarwal2010distance}
S.~Agarwal and R.~Hauswald.
\newblock Distance and private information in lending.
\newblock \emph{The Review of Financial Studies}, 23\penalty0 (7):\penalty0
  2757--2788, 2010.

\bibitem[Aitchison(1982)]{aitchison1982statistical}
J.~Aitchison.
\newblock The statistical analysis of compositional data.
\newblock \emph{Journal of the Royal Statistical Society: Series B
  (Methodological)}, 44\penalty0 (2):\penalty0 139--160, 1982.

\bibitem[Alessandrini et~al.(2009)Alessandrini, Presbitero, and
  Zazzaro]{alessandrini2009global}
P.~Alessandrini, A.~F. Presbitero, and A.~Zazzaro.
\newblock Global banking and local markets: a national perspective.
\newblock \emph{Cambridge Journal of Regions, Economy and Society}, 2\penalty0
  (2):\penalty0 173--192, 2009.

\bibitem[Algeri et~al.(2022)Algeri, Anselin, Forgione, and
  Migliardo]{algeri2022spatial}
C.~Algeri, L.~Anselin, A.~F. Forgione, and C.~Migliardo.
\newblock Spatial dependence in the technical efficiency of local banks.
\newblock \emph{Papers in Regional Science}, 101\penalty0 (3):\penalty0
  685--717, 2022.

\bibitem[Artis(2013)]{artis2013finance}
A.~Artis.
\newblock Finance solidaire et syst{\`e}me financier: une approche historique.
\newblock \emph{Revue internationale de l'{\'e}conomie sociale}, 329\penalty0
  (1):\penalty0 65--78, 2013.

\bibitem[Artis and Corn{\'e}e(2016)]{artis2016composition}
A.~Artis and S.~Corn{\'e}e.
\newblock Composition, interpr{\'e}tation et m{\'e}morisation du savoir
  idiosyncrasique dans la banque solidaire.
\newblock \emph{Syst{\`e}mes d’information et Management}, 21\penalty0
  (3):\penalty0 93--131, 2016.

\bibitem[Artis et~al.(2020)Artis, Roger, and
  Rousseli{\`e}re]{artis2020facteurs}
A.~Artis, B.~Roger, and D.~Rousseli{\`e}re.
\newblock Facteurs d’implantation de l’{ESS} dans les territoires:
  proposition pour une nouvelle mod{\'e}lisation.
\newblock \emph{RECMA}, 4\penalty0 (358):\penalty0 53--71, 2020.

\bibitem[Artis et~al.(2025)Artis, Coeurjolly, and Letué]{VVMZTN_2025}
A.~Artis, J.-F. Coeurjolly, and F.~Letué.
\newblock {Cross-referenced data on five banks locations in mainland France and
  demographic, social and econometric factors on the living area spatial
  resolution}.
\newblock Recherche Data Gouv, 2025.
\newblock URL \url{https://doi.org/10.57745/VVMZTN}.

\bibitem[Baddeley et~al.(2015)Baddeley, Rubak, and Turner]{baddeley2015spatial}
A.~Baddeley, E.~Rubak, and R.~Turner.
\newblock \emph{Spatial point patterns: methodology and applications with R}.
\newblock CRC press, 2015.

\bibitem[Baddeley et~al.(2017)Baddeley, Nair, Rakshit, and
  McSwiggan]{baddeley2017stationary}
A.~Baddeley, G.~Nair, S.~Rakshit, and G.~McSwiggan.
\newblock “stationary” point processes are uncommon on linear networks.
\newblock \emph{Stat}, 6\penalty0 (1):\penalty0 68--78, 2017.

\bibitem[Berger and Udell(2002)]{berger2002small}
A.~N. Berger and G.~F. Udell.
\newblock Small business credit availability and relationship lending: The
  importance of bank organisational structure.
\newblock \emph{The economic journal}, 112\penalty0 (477):\penalty0 F32--F53,
  2002.

\bibitem[Choiruddin et~al.(2018)Choiruddin, Coeurjolly, and
  Letu{\'e}]{choiruddin2018convex}
A.~Choiruddin, J.-F. Coeurjolly, and F.~Letu{\'e}.
\newblock Convex and non-convex regularization methods for spatial point
  processes intensity estimation.
\newblock \emph{Electronic Journal of Statistics}, 12\penalty0 (1):\penalty0
  1210--1255, 2018.

\bibitem[Choiruddin et~al.(2021)Choiruddin, Coeurjolly, and
  Waagepetersen]{choiruddin2021information}
A.~Choiruddin, J.-F. Coeurjolly, and R.~Waagepetersen.
\newblock Information criteria for inhomogeneous spatial point processes.
\newblock \emph{Australian \& New Zealand Journal of Statistics}, 63\penalty0
  (1):\penalty0 119--143, 2021.

\bibitem[Choiruddin et~al.(2023)Choiruddin, Coeurjolly, and
  Letu{\'e}]{choiruddin2023adaptive}
A.~Choiruddin, J.-F. Coeurjolly, and F.~Letu{\'e}.
\newblock Adaptive lasso and dantzig selector for spatial point processes
  intensity estimation.
\newblock \emph{Bernoulli}, 29\penalty0 (3):\penalty0 1849--1876, 2023.

\bibitem[Choiruddin et~al.(2024)Choiruddin, Yuni~Susanto, Husain, and
  Mega~Kartikasari]{choiruddin2023kppmenet}
A.~Choiruddin, T.~Yuni~Susanto, A.~Husain, and Y.~Mega~Kartikasari.
\newblock kppmenet: combining the kppm and elastic net regularization for
  inhomogeneous {C}ox point process with correlated covariates.
\newblock \emph{Journal of Applied Statistics}, 51\penalty0 (5):\penalty0
  993--1006, 2024.

\bibitem[Coeurjolly and Guan(2014)]{coeurjolly2014covariance}
J.-F. Coeurjolly and Y.~Guan.
\newblock Covariance of empirical functionals for inhomogeneous spatial point
  processes when the intensity has a parametric form.
\newblock \emph{Journal of Statistical Planning and Inference}, 155:\penalty0
  79--92, 2014.

\bibitem[Coeurjolly and Lavancier(2019)]{coeurjolly2019understanding}
J.-F. Coeurjolly and F.~Lavancier.
\newblock Understanding spatial point patterns through intensity and
  conditional intensities.
\newblock \emph{Stochastic Geometry: Modern Research Frontiers}, pages 45--85,
  2019.

\bibitem[Coeurjolly et~al.(2025)Coeurjolly, Ba, and
  Choiruddin]{coeurjolly2025regularization}
J.-F. Coeurjolly, I.~Ba, and A.~Choiruddin.
\newblock Regularization techniques for inhomogeneous (spatial) point processes
  intensity and conditional intensity estimation.
\newblock \emph{ESAIM: Proceedings and Surveys}, 80:\penalty0 2--16, 2025.

\bibitem[EACB(2020)]{eacb}
EACB.
\newblock Key statistics - financial indicators, 2020.

\bibitem[Fan and Li(2001)]{fan2001variable}
J.~Fan and R.~Li.
\newblock Variable selection via nonconcave penalized likelihood and its oracle
  properties.
\newblock \emph{Journal of the American statistical Association}, 96\penalty0
  (456):\penalty0 1348--1360, 2001.

\bibitem[Fisman et~al.(2017)Fisman, Paravisini, and Vig]{fisman2017cultural}
R.~Fisman, D.~Paravisini, and V.~Vig.
\newblock Cultural proximity and loan outcomes.
\newblock \emph{American Economic Review}, 107\penalty0 (2):\penalty0 457--492,
  2017.

\bibitem[Friedman et~al.(2010)Friedman, Hastie, and
  Tibshirani]{friedman2010regularization}
J.~Friedman, T.~Hastie, and R.~Tibshirani.
\newblock Regularization paths for generalized linear models via coordinate
  descent.
\newblock \emph{Journal of statistical software}, 33\penalty0 (1):\penalty0 1,
  2010.

\bibitem[Gueslin(1998)]{gueslin1998invention}
A.~Gueslin.
\newblock \emph{L'invention de l'{\'e}conomie sociale: id{\'e}es, pratiques et
  imaginaires coop{\'e}ratifs et mutualistes dans la France du XIXe
  si{\`e}cle}.
\newblock FeniXX, 1998.

\bibitem[Hasan et~al.(2019)Hasan, Jackowicz, Kowalewski, and
  Koz{\l}owski]{hasan2019economic}
I.~Hasan, K.~Jackowicz, O.~Kowalewski, and {\L}.~Koz{\l}owski.
\newblock The economic impact of changes in local bank presence.
\newblock \emph{Regional Studies}, 53\penalty0 (5):\penalty0 644--656, 2019.

\bibitem[Le~Corroller(2019)]{le2019geographie}
C.~Le~Corroller.
\newblock G{\'e}ographie {\'e}conomique de l’{\'e}conomie sociale: une
  cartographique contemporaine des mod{\`e}les du {XIX}e si{\`e}cle.
\newblock \emph{RECMA}, 351\penalty0 (1):\penalty0 81--100, 2019.

\bibitem[Li et~al.(2015)Li, Nan, and Zhu]{li2015multivariate}
Y.~Li, B.~Nan, and J.~Zhu.
\newblock Multivariate sparse group lasso for the multivariate multiple linear
  regression with an arbitrary group structure.
\newblock \emph{Biometrics}, 71\penalty0 (2):\penalty0 354--363, 2015.

\bibitem[Liang et~al.(2022)Liang, Cohen, Heinsfeld, Pestilli, and
  McDonald]{liang2022sparsegl}
X.~Liang, A.~Cohen, A.~S. Heinsfeld, F.~Pestilli, and D.~J. McDonald.
\newblock sparsegl: An {R} package for estimating sparse group lasso.
\newblock \emph{arXiv preprint arXiv:2208.02942}, 2022.

\bibitem[Liberti and Mian(2008)]{liberti2008estimating}
J.~M. Liberti and A.~R. Mian.
\newblock Estimating the effect of hierarchies on information use.
\newblock \emph{The Review of Financial Studies}, 22\penalty0 (10):\penalty0
  4057--4090, 2008.

\bibitem[Mian(2006)]{mian2006distance}
A.~Mian.
\newblock Distance constraints: The limits of foreign lending in poor
  economies.
\newblock \emph{The Journal of Finance}, 61\penalty0 (3):\penalty0 1465--1505,
  2006.

\bibitem[Myllym{\"a}ki et~al.(2017)Myllym{\"a}ki, Mrkvi{\v{c}}ka, Grabarnik,
  Seijo, and Hahn]{myllymaki2017global}
M.~Myllym{\"a}ki, T.~Mrkvi{\v{c}}ka, P.~Grabarnik, H.~Seijo, and U.~Hahn.
\newblock Global envelope tests for spatial processes.
\newblock \emph{Journal of the Royal Statistical Society Series B: Statistical
  Methodology}, 79\penalty0 (2):\penalty0 381--404, 2017.

\bibitem[Møller and Waagepetersen(2003)]{moller2003statistical}
J.~Møller and R.~P. Waagepetersen.
\newblock \emph{Statistical inference and simulation for spatial point
  processes}.
\newblock CRC press, 2003.

\bibitem[Ory and Lemzeri(2012)]{ory2012efficiency}
J.-N. Ory and Y.~Lemzeri.
\newblock Efficiency and hybridization in cooperative banking: The {F}rench
  case.
\newblock \emph{Annals of Public and Cooperative Economics}, 83\penalty0
  (2):\penalty0 215--250, 2012.

\bibitem[Papi et~al.(2017)Papi, Sarno, Zazzaro, et~al.]{papi2017geographical}
L.~Papi, E.~Sarno, A.~Zazzaro, et~al.
\newblock The geographical network of bank organizations: issues and evidence
  for italy.
\newblock \emph{Handbook on the Geographies of Money and Finance}, pages
  156--197, 2017.

\bibitem[Picciotti et~al.(2014)Picciotti, Bernardoni, Cossignani, and
  Ferrucci]{picciotti2014social}
A.~Picciotti, A.~Bernardoni, M.~Cossignani, and L.~Ferrucci.
\newblock Social cooperatives in italy: Economic antecedents and regional
  distribution.
\newblock \emph{Annals of Public and Cooperative Economics}, 85\penalty0
  (2):\penalty0 213--231, 2014.

\bibitem[Plessis(2000)]{plessis2000recherches}
A.~Plessis.
\newblock Recherches r{\'e}centes sur l'histoire bancaire de la france {\`a}
  l'universit{\'e} de {Paris X-N}anterre.
\newblock \emph{Bulletin du Centre d’Histoire de la France
  contemporaine/Recherches contemporaines}, 6\penalty0 (1):\penalty0 7--20,
  2000.

\bibitem[Presbitero et~al.(2014)Presbitero, Udell, and
  Zazzaro]{presbitero2014home}
A.~F. Presbitero, G.~F. Udell, and A.~Zazzaro.
\newblock The home bias and the credit crunch: A regional perspective.
\newblock \emph{Journal of Money, Credit and Banking}, 46:\penalty0 53--85,
  2014.

\bibitem[Radecki(1998)]{radecki1998expanding}
L.~J. Radecki.
\newblock The expanding geographic reach of retail banking markets.
\newblock \emph{Economic Policy Review}, 4\penalty0 (2), 1998.

\bibitem[Reydet(2019)]{reydet2019changement}
S.~Reydet.
\newblock Changement organisationnel dans une banque de d{\'e}tail et qu{\^e}te
  de l{\'e}gitimit{\'e}: un processus complexe.
\newblock In \emph{Annales des Mines-G{\'e}rer et comprendre}, volume 137,
  pages 39--51. Cairn/Softwin, 2019.

\bibitem[Roux(2019)]{roux2019introduction}
M.~Roux.
\newblock Introduction. entre valeurs et performances: les d{\'e}fis de la
  finance mutualiste.
\newblock \emph{Revue d’{\'e}conomie financi{\`e}re}, 2\penalty0
  (134):\penalty0 11--20, 2019.

\bibitem[Triboulet et~al.(2013)Triboulet, Peres, Filippi, and
  Chantelot]{triboulet2013empreinte}
P.~Triboulet, S.~Peres, M.~Filippi, and S.~Chantelot.
\newblock Empreinte spatiale de la coop{\'e}ration agricole fran{\c{c}}aise. un
  {\'e}clairage par l’analyse des structures de localisation.
\newblock \emph{Revue d'{\'e}conomie r{\'e}gionale et urbaine}, 2\penalty0
  (1):\penalty0 363--390, 2013.

\bibitem[Waagepetersen and Guan(2009)]{waagepetersen2009two}
R.~Waagepetersen and Y.~Guan.
\newblock Two-step estimation for inhomogeneous spatial point processes.
\newblock \emph{Journal of the Royal Statistical Society Series B: Statistical
  Methodology}, 71\penalty0 (3):\penalty0 685--702, 2009.

\bibitem[Zhang et~al.(2010)Zhang, Li, and Tsai]{zhang2010regularization}
Y.~Zhang, R.~Li, and C.-L. Tsai.
\newblock Regularization parameter selections via generalized information
  criterion.
\newblock \emph{Journal of the American statistical Association}, 105\penalty0
  (489):\penalty0 312--323, 2010.

\bibitem[Zou and Zhang(2009)]{zou2009adaptive}
H.~Zou and H.~H. Zhang.
\newblock On the adaptive elastic-net with a diverging number of parameters.
\newblock \emph{The Annals of Statistics}, 37\penalty0 (4):\penalty0 1733,
  2009.

\end{thebibliography}

\end{document}